\documentclass[12pt]{amsart}

\setlength{\parskip}{6pt}
\setlength{\parindent}{0pt}

\usepackage{tgpagella}
\usepackage{euler}
\usepackage[T1]{fontenc}
\usepackage{amsmath, amssymb}
\usepackage[hidelinks]{hyperref}
\usepackage[english]{babel}
\usepackage{mathrsfs}
\usepackage{eucal}
\usepackage[all]{xy}
\usepackage{tikz}

\newtheorem{thm}{Theorem}
\newtheorem*{thm*}{Theorem}
\newtheorem{lem}[thm]{Lemma}

\newtheorem{prop}[thm]{Proposition}
\newtheorem*{prop*}{Proposition}

\newtheorem*{cor*}{Corollary}

\theoremstyle{definition}
\newtheorem{defn}[thm]{Definition}
\newtheorem*{defn*}{Definition}

\newtheorem*{question*}{Question}
\newtheorem*{Pquestion*}{Popa's question}

\newtheorem*{conv*}{Convention}

\def\cal{\mathcal}

\def\fin{\Subset}

\def\M{\mathcal{M}}

\makeatletter

\def\dotminussym#1#2{%
  \setbox0=\hbox{$\m@th#1-$}%
  \kern.5\wd0%
  \hbox to 0pt{\hss\hbox{$\m@th#1-$}\hss}%
  \raise.6\ht0\hbox to 0pt{\hss$\m@th#1.$\hss}%
  \kern.5\wd0}




\def \cirB{{}^{\circ}B}
\def \cirH{{}^{\circ}\mathbf{H}}

\def \fin{\operatorname{fin}}
\def \ns{\operatorname{ns}}
\def \st{\operatorname{st}}

\def \starR{{}^*\mathbb R}
\def \starC{{}^*\mathbb C}
\def \starN{{}^*\mathbb N}
\def \starX{{}^* X}

\def \bH{\mathbf{H}}

\def \dphi{|\phi\rangle}
\def \cdphi{{}^{\circ}|\phi\rangle}
\def \dpsi{|\psi\rangle}

\def \dpsii{|\psi_i\rangle}

\textwidth 5.75in
\oddsidemargin 0.375in
\evensidemargin 0.375in


\begin{document}


\title{Everettian mechanics with hyperfinitely many worlds}
\author{Jeffrey Barrett and Isaac Goldbring}
\thanks{Goldbring was partially supported by NSF grant DMS-2054477.}

\address{Department of Logic and Philosophy of Science\\University of California, Irvine, 765 Social Sciences Tower,
Irvine, CA 92697-5000}
\address{Department of Mathematics\\University of California, Irvine, 340 Rowland Hall (Bldg.\# 400),
Irvine, CA 92697-3875}
\email{j.barrett@uci.edu}
\urladdr{
https://faculty.sites.uci.edu/jeffreybarrett/}
\email{isaac@math.uci.edu}
\urladdr{http://www.math.uci.edu/~isaac}

\maketitle

\begin{abstract}
The present paper shows how one might model Everettian quantum mechanics using hyperfinitely many worlds. A hyperfinite model allows one to consider idealized measurements of observables with continuous-valued spectra where different outcomes are associated with possibly infinitesimal probabilities. One can also prove hyperfinite formulations of Everett's limiting relative-frequency and randomness properties, theorems he considered central to his formulation of quantum mechanics. This approach also provides a more general framework in which to consider no-collapse formulations of quantum mechanics more generally.
\end{abstract}

\section{Everettian quantum mechanics and many worlds}

Hugh Everett III (1956) (1957) presented pure wave mechanics as a solution to the quantum measurement problem encountered by the standard collapse theory.\footnote{Barrett (2019, 42--53 and 105--17) for a detailed description of the standard collapse formulation of quantum mechanics and the measurement problem and Barrett (2018) for an overview of Everett's project.} He characterized pure wave mechanics as the standard von Neumann-Dirac formulation of quantum mechanics but without any collapse of the quantum-mechanical state on measurement or any other time. All physical systems always evolve in a deterministic, linear way described by the standard quantum dynamics.

On Everett's interpretation of the quantum-mechanical state, the linear dynamics describes the universe as constantly splitting into branches or worlds corresponding to different measurement outcomes. How many worlds there are and the state of each world depends on how one understands the global state. Everett's description of the principle of the relativity of states allows for \emph{arbitrary} decompositions of the state, and he explicitly allowed for decompositions in terms of eigenstates of observables with continuous-valued spectra. His discussion of relative states in the context of von Neumann's account of a position measurement, for example, involves a superposition of a continuous number of branches, in each of which the object system has a different determinate position (1957, 180--2). For this reason, he was careful to allow for information measures and probability distributions over sets of unrestricted cardinality in his theory (1956, 86, 89--92). In brief, on Everett's own presentation there is a potentially uncountably infinite number of worlds depending on which decomposition of the state one considers.

While Everett almost always referred to branches rather than worlds (and the term ``world'' never appears in his written work), he explicitly endorsed there being an uncountable number of \emph{worlds} in a recorded discussion with colleagues in 1962. He had been invited to describe his formulation of quantum mechanics at a conference at Xavier University that had been convened to discuss the quantum measurement problem. After Everett had briefly described how the theory worked, the physicist Boris Podolsky said, ``It looks like we would have a non-denumerable infinity of worlds.'' To which Everett responded, ``Yes.''\footnote{For a transcript of Everett's description of his theory and the discussion that followed see Barrett and Byrne 2012, 270--79.}

In contrast, Bryce Dewitt, perhaps Everett's most energetic and effective proponent, held that since the correlations produced by measurement-like interactions are always only approximate for observables with continuous spectra, there must be at most a denumerable number of worlds.\footnote{See DeWitt (1971, 210--11) in DeWitt and Graham (1973) for his account of the measurement of observables with continuous spectra.} Indeed, DeWitt repeatedly suggested that the cardinally was large but finite, famously reporting that there were $10^{100+}$ constantly splitting worlds.\footnote{See DeWitt (1970, 33) and (1971, 179) for such claims.} Significantly, it is customary for many-worlds proponents simply to stipulate that there are at most a finite number of worlds for technical reasons and/or as a matter of conceptual convenience.\footnote{Sebens and Carroll (2015), for example, stipulate a finite number of worlds in order to assign self-location probabilities to worlds by means of a principle of indifference. Among other things, this avoids the technical problem one faces in assigning an unbiased prior over even just a countably infinite number of worlds using the standard mathematical model.}

In the context of deocoherence formulations of Everettian quantum mechanics, the question of the number of worlds is more subtle. As as particularly salient example, David Wallace has argued that since decoherence, the physical phenomena that tends to prevent interference between branches, comes in degree, there is no simple matter of fact about how many worlds there are at a time. Rather, how one individuates worlds on a decohering-worlds account depends on one's level of description and practical interests.\footnote{See Wallace (2012 99--102) for a brief introduction to the idea.} In order to prove the representation theorems that he wants for his decohering-worlds formulation of quantum mechanics, Wallace explicitly rules out the possibility of positive infinitesimal probabilities of the sort one would invariably encounter if one allowed for an infinite number of worlds.\footnote{See Wallace (2012, 227) and his discussion of decision theory in Appendix B.} But on the pragmatic view of emergent worlds that he has in mind, one arguably need never consider more than a finite number of worlds in any case.

Our aim here is not to provide a once-and-for-all account of worlds for Everettian quantum mechanics. Rather, it is to suggest a way of reconstructing Everett's original picture of how branches work and a way of thinking of non-denumerably many worlds and probability distributions over such worlds when it is convenient to do so. Specifically, a hyperfinite model allows one to consider idealized measurements of observables with continuous-valued spectra and it provides an intuitive set of tools for studying non-denumerable collections of worlds. It also allows one to provide elegant nonstandard reformulations of the two limiting properties that Everett took to be central to his formulation of quantum mechanics. And it provides an intuitive context that works very much like a finite-dimensional Hilbert space for representing no-collapse formulations of quantum mechanics more generally.

\section{worlds and probabilities}\label{worlds}

It will be helpful to start with a simple example of how pure wave mechanics describes an ideal measurement interaction where there are finitely many (here, two) possible measurement outcomes. Suppose that an observer $F$ measures the $x$-spin of a spin-$1/2$ system $S$ that begins in the state
\begin{equation}
\alpha |\!\uparrow_x\rangle_S + \beta |\!\downarrow_x\rangle_S.
\end{equation}
Let the state $|\mbox{``ready''}\rangle_F$ represent the state where $F$ is ready to observe the $x$-spin of $S$ and record the result in her notebook. The initial state of the composite system of $F$ and $S$ then is
\begin{equation}
|\mbox{``ready''}\rangle_F (\alpha |\!\uparrow_x\rangle_S + \beta |\!\downarrow_x\rangle_S).
\end{equation}
Since $|\mbox{``ready''}\rangle_F|\!\uparrow_x\rangle_S$ would evolve to 
$|\mbox{``$\uparrow_x$''}\rangle_F|\!\uparrow_x\rangle_S$ and $|\mbox{``ready''}\rangle_F|\!\downarrow_x\rangle_S$ would evolve to 
$|\mbox{``$\downarrow_x$''}\rangle_F|\!\downarrow_x\rangle_S$ on ideal measurements, the initial state above would evolve to the final post-measurement state
\begin{equation}
\alpha|\mbox{``$\uparrow_x$''}\rangle_F|\!\uparrow_x\rangle_S + \beta |\mbox{``$\downarrow_x$''}\rangle_F|\!\downarrow_x\rangle_S
\end{equation}
by the linearity of the standard quantum dynamics.

Note that there is no collapse of the quantum mechanical state here. Hence one avoids the problem of having to say how and when such a random event might occur. But while dropping the collapse dynamics from the standard collapse theory allows one to provide a manifestly consistent account of an idealized measurement interaction, it also immediately leads to two new problems. One involves how one explains determinate records, the other how one understands probabilities.

Since the final post-measurement state is not one where $F$ has any determinate measurement record at all on the standard eigenvalue-eigenstate rule for interpreting quantum-mechanical states, we need a new way to interpret states.\footnote{See Barrett (2019, 42--6) for a description of the standard interpretation of states.}  On the many-worlds interpretation of pure wave mechanics, easily the popular way of understanding Everett's theory, one understands each term of the final state above as corresponding to a physical world, one for each possible result. Inasmuch as the post-measurement state describes one observer with the result ``$\uparrow_x$'' and another observer with the result ``$\downarrow_x$'', this interpretation of the final state immediately solves the determinate-record problem.

While the standard collapse formulation of quantum mechanics encounters the quantum measurement problem and is hence ultimately unsatisfactory, it predicts precisely the right forward-looking probabilities for simple experiments like this one. Here it predicts that $F$ should expect to get the result ``$\uparrow_x$'' with probability $|\alpha|^2$ and the result ``$\downarrow_x$'' with probability $|\beta|^2$. That it predicts precisely the right forward-looking probabilities for such experiments is what makes the standard collapse theory one of the most successful physical theories ever. These are the empirical predictions that one would like to recapture in any satisfactory formulation of quantum mechanics. To address the probability problem, then, one needs to provide some way of understanding the standard quantum probabilities in pure wave mechanics, a deterministic theory where every physically possible measurement outcome is in fact fully realized as a world characterized by a branch of the quantum mechanical state.

Self-location probabilities, probabilities that represent one's epistemic uncertainty of finding oneself with a particular measurement result, provide the most promising way to understand probabilities in a many-worlds theory.\footnote{See Vaidman (2012) and (2014) for introductions to this approach.} But how one understands such probabilities depends on how one understands worlds and how one understands self-location uncertainty in those worlds.

Suppose one pictures $F$'s premeasurement world splitting into one copy for each possible measurement result when $F$ makes her measurement. On the face of it, she should expect (with probability 1) to find one future copy of herself in a world where she gets the result ``$\uparrow_x$'', and she should expect (also with probability 1) to find another future copy of herself in a world where she gets the result ``$\downarrow_x$''. But these are entirely the wrong quantum probabilities.

The way one understands probabilities in a many-worlds theory depends on the basic metaphysical picture one adopts for the worlds. Inasmuch as each of the future copies of $F$ have equal claim to being $F$ on a splitting-worlds view, she cannot make straightforward sense of standard forward-looking quantum probabilities as subjective degrees of belief regarding which world she will inhabit after she makes her measurement. That said, after she performs her measurement and subsequently inhabits a world where there is now a single determinate record, she can make perfectly good sense of quantum probabilities as synchronic self-location probabilities.

Suppose $F$ makes her measurement and suppose that her initial world has split into one with a copy of $F$ and her measuring device that records the result ``$\uparrow_x$'' and another with a copy of $F$ and her measuring device records the result ``$\downarrow_x$''. Consider one of the copies of $F$. Suppose that she has not yet looked at the result that the measuring device in her world has recorded. $F$ can now ask herself the perfectly coherent synchronic question of what her degree of belief should be that the recorded result in the world she inhabits right now is ``$\uparrow_x$'' and what her degree of belief should be that the recorded result in her world is ``$\downarrow_x$''.

While there is much to say about how one gets the values of such quantum probabilities in a many-worlds theory, we will suppose here that one has a formulation of the theory that stipulates that one should assign a subjective degree of belief equal to the norm-squared of the amplitude associated with one's post-measurement world conditional on one knows of one's premeasurement world.\footnote{How quantum statistics worked for Everett had to do with the limiting relative-frequency and limiting properties we discuss later in the paper. See Barrett (2019, 143--74) for a description of how his understanding of quantum statistics might be reconstructed.} In the present case, since we are supposing that $F$ begins ready to perform the described measurement, one wants one's full theory to stipulate $|\alpha|^2$ as the probability of getting ``$\uparrow_x$'' and $|\beta|^2$ the probability of getting ``$\downarrow_x$''. Here one is getting the standard quantum probabilities as synchronic self-location probabilities rather than forward-looking probabilities regarding her outcome. What $F$ \emph{can} say before performing her measurement is that each of her future selves will have post-measurement synchronic self-location probabilities given by the norm squared of the amplitude associated with each world.

One can get something that is arguably closer to standard forward-looking quantum probabilities by adopting a many-worlds theory where the worlds do not split. Suppose that there is one world for every possible history or thread through the branching structure generated by a series of subsequent measurements. On this picture, there are two copies of $F$ both before and after the measurement interaction on the simple example above. One copy will get the result ``$\uparrow_x$'' and the other will get the result ``$\downarrow_x$''. Here one can take quantum probabilities to be degrees of belief concerning which history or thread one inhabits. Even before performing her measurement, $F$ can coherently assign the standard forward-looking quantum probabilities to the outcome as they are just the synchronic self-location probabilities for her current inhabiting each of the two worlds.\footnote{See the discussion of many-threads theories in Barrett (2019, 184--87).}

While there is more to say about how one might individuate worlds and understand probability in even the simple case with two possible measurement results, this will serve for our present purposes. The hyperfinite approach that we develop here can be applied to either splitting or non-splitting worlds. To keep things simple, we will focus on splitting worlds.

The question we turn to now is how one might understand worlds and probabilities in the context of an idealized measurement of a physical observable with a continuous spectrum. The model involves a hyperfinite number of worlds, each associated with a possibly infinitesimal probability.

\section{A brief introduction to nonstandard methods}\label{NSA}

The task at hand requires a few basic tools. The following prerequisites concern the nonstandard methods we will use to characterize hyperfinitely many worlds.\footnote{See Goldblatt (1998) for a standard introductory text and Goldbring and Walsh (2019) for a description of potential applications. For a text on nonstandard methods written for physicists see Albeverio et. al. (1986).}

The basic tenet of nonstandard analysis is to extend every set $X$ under consideration to a \textbf{nonstandard extension} $\starX$ which satisfies the following two properties:  (1)  $\starX$ has the same (first-order) logical properties as $X$\footnote{This fact is often called the \textbf{transfer principle}.}, and (2)  $\starX$ contains new ``ideal'' elements not present in the original set $X$.\footnote{The ``number'' of these new ideal elements is controlled by the \emph{saturation level} of the nonstandard extension.}  For example, the field of \textbf{hyperreals} $\starR$ is a field extending the field $\mathbb R$ that shares the same logical properties as the reals $\mathbb R$ (such as, for example, being an ordered field) while containing new \textbf{infinitesimal} and \textbf{infinite elements}.  Every finite (that is, noninfinitesimal) element $r$ of $\starR$ is infinitely close to a unique standard real number, called the \textbf{standard part} of $r$, denoted $\st(r)$.  In the sequel, we let $\fin(\starR)$ denote the set of finite elements of $\starR$.  In general, for $r,s\in \starR$, we write $r\approx s$ if $r$ and $s$ are infinitely close to each other.

Since functions are certain kinds of sets, they also have nonstandard extensions. Specifically, a function $f:X\to Y$, identified with its graph, will be extended to a subset of $\starX\times {}^*Y$.  It is easy to check that this nonstandard extension is itself the graph of a function, which we abusively denote $f:\starX\to {}^*Y$ (as it is readily verified that it extends the original function $f$).

An important distinction between subsets of a nonstandard extension $\starX$ is the \textbf{internal} vs. \textbf{external} distinction.  Given any set $X$, we are entitled to consider the nonstandard extension ${}^*\mathcal{P}(X)$ of the powerset $\mathcal{P}(X)$ of $X$.  Under a natural identification, we may view elements of ${}^*\mathcal{P}(X)$ as actual subsets of ${}^*X$; the sets thus obtained are referred to as the internal subsets of $\starX$.  By the transfer principle, internal subsets of $\starX$ have the same logical properties as ordinary subsets of $X$.  For example, internal subsets of $\starR$ that are bounded above have a supremum.  This fact need not hold for external (that is, non-internal) subsets of $\starR$, such as the set of infinitesimal elements.  Subsets of $\starX$ of the form ${}^*Y$, where $Y$ is a subset of $X$, are internal subsets of $\starX$, but not every internal subset of $\starX$ is of this form.  The \textbf{internal definition principle} states that a set defined using internal parameters in a first-order way is once again internal.  Consequently, finite sets are always internal.

Using the internal definition principle, given $N\in \starN$, the set of elements of $\starN$ between $1$ and $N$ is an internal subset of $\starN$, suggestively denoted $\{1,\ldots,N\}$ (although it is not in fact finite if $N$ is infinite). If a (necessarily internal) set $E$ is in internal bijection with the internal set $\{1,\ldots,N\}$ for some $N\in \starN$, then $E$ is called \textbf{hyperfinite}.  Such an $N$ is automatically unique and is called the \textbf{internal cardinality} of $E$, suggestively denoted $|E|$.  Finite sets are hyperfinite and their internal cardinality agrees with the usual notion of cardinality. Hyperfinite sets share many of the intuitive properties of finite sets. That said, if $E$ is hyperfinite but infinite, then its \emph{actual} cardinality is at least the continuum and can be even larger depending on the saturation properties of the nonstandard model. Such will be the set of hyperfinitely many worlds corresponding to an observable with a continuous spectrum. By the transfer princple, an internal subset of a hyperfinite set is itself hyperfinite.  By considering the nonstandard extension of the summation function, one can make sense of hyperfinite sums $\sum_{x\in E}f(x)$, where $E$ is a hyperfinite set and $f$ is some internal function defined on $E$.

The Loeb measure construction will be used extensively throughout this paper.  Suppose that $E$ is an internal set and $\mu$ is an internal finitely additive probability measure defined on some internal algebra of internal subsets of $E$.  Note that $\mu$ takes values in ${}^*[0,1]$.  It can be shown that the associated (genuine) finitely additive measure $F\mapsto \st(\mu(F))$ (defined on the same algebra as $\mu$) can be extended to a genuine (that is, $\sigma$-additive) probability measure $\mu_L$ on the $\sigma$-algebra generated by the original internal algebra.  This measure is called the \textbf{Loeb measure} associated to $\mu$.  An example of such an internal finitely additive probability measure $\mu$ is the hyperfinite counting measure associated to a hyperfinite set:  if $E$ is hyperfinite and $F$ is an internal subset of $E$, then $\mu(F):=\frac{|F|}{|E|}$.

Internal Hilbert spaces play a central role in the present hyperfinite model.  An internal Hilbert space consists of an internal set $\bH$ equipped with an internal addition $+:\bH\times \bH\to \bH$, an internal scalar multiplication $\cdot:\starC\times \bH\to \bH$, and an internal inner product $\langle\cdot|\cdot\rangle:\bH\times\bH\to \starC$ satisfying the usual Hilbert space axioms.\footnote{Except that completeness gets replaced by internal completeness.}  For $\dphi,\dpsi\in \bH$, we write $\dphi\approx \dpsi$ if $\|\dphi-\dpsi\|\approx 0$.  By the transfer principle, every internal Hilbert space $\bH$ possesses an internal orthonormal basis.  If this internal orthonormal basis is actually hyperfinite, then we say that the internal Hilbert space $\bH$ is \textbf{hyperfinite-dimensional}.  In this case, if $(\dpsii)_{i\in I}$ is a (hyperfinite) orthonormal basis for $\bH$, then every element $\dphi\in \bH$ can be uniquely expressed as a hyperfinite sum $\dphi=\sum_{i\in I}\alpha_i\dpsii$, where each $\alpha_i\in \starC$.

\section{continuous-valued observables and the hyperfinite model}

\subsection{Hyperfinite-dimensional state spaces}

Everettian quantum mechanics for finite-dimensional state spaces as in the spin example above extends naturally to the hyperfinite-dimensional setting. 
Fix a hyperfinite-dimensional Hilbert space $\bH$ and fix an observable $B$ on $\bH$, which is an internally self-adjoint operator on $\bH$. The observable $B$ is what determines what worlds there are at a time. Depending on one's formulation of the theory, it might be a determinate-pointer observable or determinate-record observable, or it might be selected by decoherence interactions or other dynamical considerations. Once specified, there is an internal orthonormal basis $(\dpsii)_{i\in I}$ of eigenvectors of $B$ with corresponding eigenvalues $(\lambda_i)_{i\in I}$, where each $\lambda_i\in \starR$.  These eigenvalues correspond to the determine measurement outcomes when measuring the observable $B$.  The set $W$ of worlds thus coincides with the set $\{\dpsii \ : \ i\in I\}$ of eigenstates of $B$.  An arbitrary state is simply a unit vector $\dphi\in \bH$ and can be decomposed as a hyperfinite sum $\dphi=\sum_{i\in I}\alpha_i\dpsii$ with each $\alpha_i\in \starC$.  In this state, the probability of finding oneself in the world corresponding to $\dpsii$ is the hyperreal number $|\alpha_i|^2\in \starR$, which may be infinitesimal.\footnote{Such probabilities merely satisfy ``hypercountable'' additivity instead of the usual countable additivity.}

The linear dynamics for the hyperfinite setting is analogous to the finite setting.  Specifically, if one fixes an internally self-adjoint operator $T$ on $\bH$, the Hamiltonian for the system, then one can consider the unitary time evolution $V_t$ of the system, which is an internal one-parameter family of internal unitary operators on $\bH$ given by $V_t\dphi=e^{-itT}\dphi$.\footnote{There are a number of equivalent ways of interpreting the exponentiated operator here.  For example, one can view matrix exponentiation as a function $\bigcup_{n\in \mathbb N}M_n(\mathbb C)\to \bigcup_{n\in \mathbb N}M_n(\mathbb C)$ and then the above exponentiated operator is given by the nonstandard extension of this function.}  Note that these operators are defined for all $t\in \starR$. 

\subsection{Hyperfinite-dimensional models for standard state spaces}

We now consider how one might model the standard situation of an observable with continuous spectrum using a hyperfinite-dimensional state space as described in the previous subsection.\footnote{See Raab (2004) for a preliminary description of this example in the context of the standard collapse theory.}  Before doing so, we need some preliminaries.

\begin{defn}
Fix a Hilbert space $\cal H$ with dense subspace $\cal M$.  We say that a hyperfinite-dimensional subspace $\bH$ of ${}^*\cal H$ is \textbf{adapted to} $\cal M$ if $\cal M\subseteq \bH\subseteq {}^*\cal M$.
\end{defn}

The following is a routine application of saturation:

\begin{lem}
For any dense subspace $\cal M$ of a Hilbert space $\cal H$, there is a hyperfinite-dimensional subspace of ${}^*\cal H$ adapted to $\cal M$.\footnote{The uninitated reader may view this result as a more elaborate version of the statement that there exist positive infinitesimal real numbers.  The latter statement follows from an appropriate saturation assumption together with the fact that given any finite collection of positive real numbers, there is a positive real number smaller than all of them.  Here, given any finite subset of $\cal M$, there is a finite-dimensional subspace of $\cal \M$ containing all of them.}
\end{lem}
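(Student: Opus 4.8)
The plan is to prove this by a standard saturation argument, exactly as the footnote suggests. First I would observe that what we want is an internal hyperfinite-dimensional subspace $\bH$ of ${}^*\mathcal{H}$ with $\mathcal M \subseteq \bH \subseteq {}^*\mathcal M$. The key is to write this as the satisfiability of a family of conditions and then invoke saturation (which we are entitled to assume is built into our nonstandard extension at a level exceeding $|\mathcal M|$). The natural set of conditions to consider is indexed by the finite subsets $F$ of $\mathcal M$: for each such $F$, let $\Phi_F$ assert "$X$ is a hyperfinite-dimensional internal subspace of ${}^*\mathcal M$ (i.e. an internal $\starC$-linear subspace spanned by a hyperfinite orthonormal set) containing $F$."

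**The main steps.**

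First I would check that each finite family $\{\Phi_{F_1},\ldots,\Phi_{F_k}\}$ is simultaneously satisfiable in ${}^*\mathcal H$. This reduces to satisfying $\Phi_{F}$ for $F = F_1\cup\cdots\cup F_k$, and here the point is a transfer of a trivial standard fact: since $\mathcal M$ is a subspace of $\mathcal H$, any finite subset $F$ of $\mathcal M$ is contained in some finite-dimensional subspace of $\mathcal M$ (e.g.\ $\spn_{\bC}(F)$ itself, or its span together with whatever is needed for an orthonormal basis living inside $\mathcal M$ — one can take the Gram–Schmidt orthonormalization of $F$, which still lies in $\mathcal M$). Thus the standard statement "for every finite subset $F\subseteq \mathcal M$ there is a finite-dimensional subspace of $\mathcal M$ containing $F$ and admitting an orthonormal basis inside $\mathcal M$" is true; applying transfer to a suitable first-order formalization, the nonstandard version holds, so each $\Phi_F$ (and hence each finite conjunction) is realized by some internal finite-dimensional — a fortiori hyperfinite-dimensional — subspace of ${}^*\mathcal M$.

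Next, I would phrase the collection $\{\Phi_F : F \in [\mathcal M]^{<\omega}\}$ as a type over ${}^*\mathcal H$ with parameters from $\mathcal M$ (each condition $\Phi_F$ uses finitely many parameters, namely the elements of $F$, together with the internal objects ${}^*\mathcal M$ and the internal subspace-membership relation). The cardinality of this family is $|\mathcal M|$, which is below the saturation level of the nonstandard extension. By saturation, the whole type is realized: there is a single internal subspace $\bH$ of ${}^*\mathcal M$, hyperfinite-dimensional (being internally finite-dimensional, or hyperfinite-dimensional, which by the internal definition principle and transfer is an internal property), with $v \in \bH$ for every $v\in \mathcal M$. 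That gives $\mathcal M \subseteq \bH \subseteq {}^*\mathcal M \subseteq {}^*\mathcal H$, which is precisely what "adapted to $\mathcal M$" means.

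**The main obstacle.**

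The only genuinely delicate point — and the one I would be most careful about — is the bookkeeping ensuring that "hyperfinite-dimensional internal subspace of ${}^*\mathcal M$" is expressible as an internal condition so that transfer and saturation both apply cleanly: one wants the approximating objects at the finite stage to be \emph{internal} (finite-dimensional subspaces, encoded say by a finite tuple forming an orthonormal basis drawn from ${}^*\mathcal M$), and one wants the realized object to come out hyperfinite-dimensional rather than merely internal. This is handled by formalizing $\Phi_F$ as "$\exists n\in{}^*\bN\,\exists$ orthonormal $n$-tuple $(e_1,\ldots,e_n)$ from ${}^*\mathcal M$ such that $X = \starC\text{-}\spn(e_1,\ldots,e_n)$ and $F\subseteq X$", which is a bona fide internal (indeed first-order over the nonstandard structure) condition on $X$; its finite satisfiability is the transferred standard triviality, and a realization is by construction hyperfinite-dimensional. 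Beyond this, everything is routine.
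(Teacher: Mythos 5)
Your proposal is correct and is exactly the argument the paper has in mind: the paper relegates the proof to the footnote ("given any finite subset of $\mathcal M$, there is a finite-dimensional subspace of $\mathcal M$ containing all of them") and calls the rest a routine application of saturation, which is precisely the finite-satisfiability-plus-saturation argument you spell out. Your additional care about formalizing "hyperfinite-dimensional internal subspace of ${}^*\mathcal M$ containing $F$" as an internal condition, and about the saturation level exceeding $|\mathcal M|$, just makes explicit what the paper leaves implicit.
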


Suppose now that $\bH$ is adapted to $\cal M$.  Set $E_\bH:{}^*\cal H\to \bH$ to be the internal orthogonal projection map.  A routine ``overflow'' argument yields the following:

\begin{lem}
For all $\dphi\in \cal H$, $E_\bH\dphi\approx \dphi$.\footnote{The basic idea here is that the internal set of all $n\in \starN$ for which there exists $\dpsi\in \bH$ such that $\|\dphi-\dpsi\|<\frac{1}{n}$ contains the external set $\mathbb N$ (as $\bH$ contains the dense subspace $\cal M$ of $\cal H$).  Consequently, there must exist an infinite $N$ in this set, whence the lemma follows.}
\end{lem}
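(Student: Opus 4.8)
The plan is to follow the strategy sketched in the footnote: reduce the statement to finding a \emph{single} vector of $\bH$ infinitely close to $\dphi$, and then invoke the distance‑minimizing property of the orthogonal projection. Fix $\dphi\in\mathcal H$, regarded as an element of ${}^*\mathcal H$ via the canonical embedding. First I would record the standard Hilbert space fact, valid for $\bH$ by transfer since $\bH$ is an internally complete (hence ``internally closed'') subspace of ${}^*\mathcal H$, that
\[
\norm{\dphi-E_\bH\dphi}=\min\{\norm{\dphi-\dpsi}\ :\ \dpsi\in\bH\}.
\]
In particular $\norm{\dphi-E_\bH\dphi}\le\norm{\dphi-\dpsi}$ for every $\dpsi\in\bH$, so it suffices to produce one $\dpsi\in\bH$ with $\dpsi\approx\dphi$.

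To do this I would consider the set
\[
S:=\left\{\, n\in\starN\ :\ \exists\,\dpsi\in\bH\ \text{ with }\ \norm{\dphi-\dpsi}<\tfrac1n\,\right\}.
\]
By the internal definition principle, $S$ is an internal subset of $\starN$, since its defining formula uses only the internal parameters $\dphi$ and $\bH$. For every \emph{standard} $n\in\mathbb N$, density of $\mathcal M$ in $\mathcal H$ furnishes a vector $\dpsi\in\mathcal M$ with $\norm{\dphi-\dpsi}<\tfrac1n$, and since $\mathcal M\subseteq\bH$ by the hypothesis that $\bH$ is adapted to $\mathcal M$, this witnesses $n\in S$. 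Hence $\mathbb N\subseteq S$. Because $S$ is internal and contains the external set $\mathbb N$, an overflow argument produces an infinite $N\in S$; choosing $\dpsi\in\bH$ with $\norm{\dphi-\dpsi}<\tfrac1N$ and noting $\tfrac1N\approx0$, we obtain $\dpsi\approx\dphi$. Combining with the previous paragraph, $\norm{\dphi-E_\bH\dphi}\le\norm{\dphi-\dpsi}<\tfrac1N\approx0$, i.e.\ $E_\bH\dphi\approx\dphi$.

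As for the main obstacle: there is essentially none of substance — this is why it is flagged as ``routine'' — and the argument is just the nonstandard rendering of ``the nonstandard extension of a dense subspace approximates every vector, and $\mathcal M\subseteq\bH$.'' The only two points needing care are (i) confirming that $S$ is genuinely \emph{internal}, so that overflow applies, which is immediate from the internal definition principle once one checks that all parameters ($\dphi$, a standard object, and $\bH$, an internal object) are internal; and (ii) correctly invoking the transferred fact that the internal orthogonal projection $E_\bH$ onto the internally complete subspace $\bH$ realizes the infimum of the distances from $\dphi$ to $\bH$. Neither is a real difficulty.
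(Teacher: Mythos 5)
Your proposal is correct and follows exactly the argument sketched in the paper's footnote: the internal set of $n\in\starN$ admitting a $\dpsi\in\bH$ with $\norm{\dphi-\dpsi}<\frac1n$ contains $\mathbb N$ by density of $\cal M$ and the adaptedness hypothesis $\cal M\subseteq\bH$, so overflow yields an infinite $N$ and hence a $\dpsi\in\bH$ infinitely close to $\dphi$. Your explicit appeal to the distance-minimizing property of the internal orthogonal projection is the step the paper leaves implicit in ``whence the lemma follows,'' and it is handled correctly.
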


Suppose further that $T$ is an unbounded operator on $\cal H$ whose domain $D(T)$ contains $\cal M$.  Note that $T$ extends to a map $T:{}^*D(T)\to {}^*\cal H$.  Since $\bH\subseteq {}^*\cal M\subseteq {}^*D(T)$, we can consider the restriction of $T$ to $\bH$, namely $T\upharpoonright \bH:\bH\to {}^*\cal H$.  In order to obtain an operator on $\bH$, we must compose this latter map with the projection $E_\bH$.  Summarizing, we define the \textbf{natural extension} $T_\bH$ to be the internal operator on $\bH$ given by $T_\bH\dphi:=(E_\bH\circ T)\dphi$.  By the previous lemma, we have that $T_\bH\dphi\approx T\dphi$ for all $\dphi\in \cal M$.  The next lemma is easy to prove:

\begin{lem}
If $T$ as above is symmetric, then so is $T_\bH$ (whence it is internally self-adjoint).
\end{lem}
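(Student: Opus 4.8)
The plan is to verify directly that $T_\bH$ is \emph{internally symmetric}, and then to read off internal self-adjointness from the hyperfinite-dimensionality of $\bH$ via transfer.

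First I would record the ingredients. Since $\bH\subseteq{}^*\cal M\subseteq{}^*D(T)$, the nonstandard extension $T\colon{}^*D(T)\to{}^*\cal H$ is defined on all of $\bH$, and transfer of the (first-order) statement that $T$ is symmetric on $D(T)$ yields $\langle T\xi\,|\,\eta\rangle=\langle\xi\,|\,T\eta\rangle$ for all $\xi,\eta\in{}^*D(T)$, in particular for all $\xi,\eta\in\bH$. Likewise, transferring the finite-dimensional facts that an orthogonal projection onto a subspace is self-adjoint, idempotent, and acts as the identity on that subspace shows that $E_\bH$ is internally self-adjoint and that $E_\bH u=u$ for every $u\in\bH$.

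The computation is then immediate: for $u,v\in\bH$,
\[
\langle T_\bH u\,|\,v\rangle=\langle E_\bH Tu\,|\,v\rangle=\langle Tu\,|\,E_\bH v\rangle=\langle Tu\,|\,v\rangle=\langle u\,|\,Tv\rangle=\langle E_\bH u\,|\,Tv\rangle=\langle u\,|\,E_\bH Tv\rangle=\langle u\,|\,T_\bH v\rangle,
\]
where the first and last equalities are the definition of $T_\bH$, the second and sixth use internal self-adjointness of $E_\bH$, the third and fifth use $E_\bH v=v$ and $E_\bH u=u$ respectively, and the middle equality is the transferred symmetry of $T$ applied to $u,v\in\bH\subseteq{}^*D(T)$. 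Hence $T_\bH$ is internally symmetric.

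For the parenthetical ``whence it is internally self-adjoint'': since $\bH$ is hyperfinite-dimensional it is in internal bijection with $\{1,\dots,N\}$ for some $N\in\starN$, so applying transfer to the statement that for every $n\in\bN$ every symmetric operator on an $n$-dimensional Hilbert space is self-adjoint shows that $T_\bH$ is internally self-adjoint. I do not expect a genuine obstacle here; the only points requiring care are the justification that symmetry of $T$ passes to its nonstandard extension on all of ${}^*D(T)$ (hence on $\bH$), and keeping straight that $E_\bH$ acts as the identity on $\bH$ on whichever side of the inner product it appears.
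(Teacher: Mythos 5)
Your proof is correct and is exactly the direct computation the paper has in mind when it says the lemma is ``easy to prove'' (the paper omits the argument): transfer symmetry of $T$ to ${}^*D(T)\supseteq\bH$, use that $E_\bH$ is internally self-adjoint and fixes $\bH$, and conclude internal self-adjointness from hyperfinite-dimensionality by transfer. No gaps.
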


Crucial for us is the following (see Proposition 2 in Raab (2004)):

\begin{prop}\label{appev}
For each $\lambda\in \sigma(T)$, there is $\lambda'\in\sigma(T_\bH)$ (that is, an internal eigenvalue of $T_\bH$) such that $\lambda\approx \lambda'$.
\end{prop}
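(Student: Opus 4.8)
The plan is to exploit the fact that $\lambda \in \sigma(T)$ means that $T - \lambda$ is not boundedly invertible, so there is an \emph{approximate eigenvector}: a sequence of unit vectors $(\xi_n)$ in $D(T)$ with $\|(T-\lambda)\xi_n\| \to 0$. First I would observe that, since $\cal M$ is a core for $T$ (or at least dense in $D(T)$ in the graph norm — here I would use that $\cal M \subseteq D(T)$ is dense in $\cal H$ and, if necessary, that $T$ restricted to $\cal M$ has the right spectral behavior; in the symmetric/self-adjoint case the spectrum is detected by approximate eigenvectors and one can arrange $\xi_n \in \cal M$ by a density argument in the graph norm), we may take each $\xi_n \in \cal M$ with $\|\xi_n\| = 1$ and $\|(T-\lambda)\xi_n\| < \tfrac1n$.

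Next I would transfer to the nonstandard setting. For each standard $n$, the vector $\xi_n$ lies in $\cal M \subseteq \bH$, and by the earlier lemma $T_\bH \xi_n = E_\bH T \xi_n \approx T\xi_n$, so $\|(T_\bH - \lambda)\xi_n\| \approx \|(T-\lambda)\xi_n\| < \tfrac1n$ for standard $n$; more carefully, $\|(T_\bH-\lambda)\xi_n\| < \tfrac1n$ holds for all standard $n$. Now consider the internal set $A = \{\, n \in \starN : \exists\, \xi \in \bH,\ \|\xi\|=1 \text{ and } \|(T_\bH - \lambda)\xi\| < \tfrac1n \,\}$; this is internal by the internal definition principle, and it contains every standard $n$. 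By overflow, $A$ contains some infinite $N$, so there is an internal unit vector $\xi \in \bH$ with $\|(T_\bH - \lambda)\xi\| < \tfrac1N \approx 0$, i.e.\ $(T_\bH - \lambda)\xi \approx 0$.

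Finally I would convert this approximate eigenvector for $T_\bH$ into a genuine internal eigenvalue $\lambda'$ of $T_\bH$ infinitely close to $\lambda$. Here I use that $\bH$ is hyperfinite-dimensional and $T_\bH$ is internally self-adjoint, so by transfer of the finite-dimensional spectral theorem it has a hyperfinite orthonormal eigenbasis $(\dpsii)_{i\in I}$ with eigenvalues $(\mu_i)_{i\in I}$. Writing $\xi = \sum_{i\in I}\alpha_i \dpsii$ with $\sum_i |\alpha_i|^2 = 1$, we get $\|(T_\bH-\lambda)\xi\|^2 = \sum_{i\in I} |\mu_i - \lambda|^2 |\alpha_i|^2$, which is infinitesimal. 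If every $\mu_i$ were bounded away from $\lambda$ by a fixed noninfinitesimal $\varepsilon$ — more precisely, if the internal quantity $\min_{i\in I}|\mu_i - \lambda|$ were noninfinitesimal — then $\|(T_\bH-\lambda)\xi\|^2 \geq \varepsilon^2 \sum_i |\alpha_i|^2 = \varepsilon^2$, a contradiction. Hence $\min_{i\in I}|\mu_i-\lambda| \approx 0$, and since $I$ is hyperfinite this internal minimum is attained at some $i_0$; setting $\lambda' := \mu_{i_0} \in \sigma(T_\bH)$ gives $\lambda' \approx \lambda$, as desired.

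The main obstacle I anticipate is the first step: justifying that $\lambda \in \sigma(T)$ yields an approximate eigenvector \emph{with entries in the dense subspace $\cal M$} (rather than merely in $D(T)$). For $T$ self-adjoint this is standard — Weyl's criterion gives approximate eigenvectors in $D(T)$, and these can be approximated in graph norm by vectors of $\cal M$ provided $\cal M$ is a core for $T$. Since the proposition as stated only assumes $\cal M \subseteq D(T)$, one likely needs the implicit hypothesis that $\cal M$ is a core (this is presumably how Raab sets it up), and I would flag this. Everything after that — the overflow argument and the hyperfinite spectral-theorem argument — is routine nonstandard bookkeeping.
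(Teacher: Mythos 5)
Your argument is correct and is exactly the standard one: the paper does not prove this proposition itself but cites Raab (2004, Proposition 2), and Raab's proof follows the same route you describe --- Weyl approximate eigenvectors pushed into $\mathcal{M}$, overflow to get an internal unit vector $\xi\in\bH$ with $(T_\bH-\lambda)\xi\approx 0$, then the transferred finite-dimensional spectral theorem to convert this into a genuine internal eigenvalue within an infinitesimal of $\lambda$. Your caveat about the hypotheses is also well taken: as stated the proposition assumes only that $T$ is symmetric with $\mathcal{M}\subseteq D(T)$, but since $T_\bH$ is internally self-adjoint its internal eigenvalues lie in $\starR$, so the claim can only hold for self-adjoint $T$, and the step producing approximate eigenvectors inside $\mathcal{M}$ does require $\mathcal{M}$ to be a core (graph-norm dense in $D(T)$), which is the hypothesis under which Raab works and which the paper imposes in all of its subsequent theorems.
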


We now apply these preliminaries to the task at hand.  Fix a Hilbert space $\cal H$ (not necessarily separable), which is to represent the state space of our physical system.  Moreover, we fix an observable $A$, which is an unbounded self-adjoint operator on $\cal H$ with domain $D(A)$, and the Hamiltonian $H$ of our system, which is also an unbounded self-adjoint operator on $\cal H$ with domain $D(H)$.  Suppose further that $\cal M$ is a dense subspace of $\cal H$ that is contained in $D(A)\cap D(H)$.\footnote{This already presupposes that $D(A)\cap D(H)$ is a dense subspace of $\cal H$; we will add a further restriction in a moment.}  Fix a hyperfinite-dimensional subspace $\bH$ of ${}^*\cal H$ adapted to $\cal M$ and consider the natural extensions $A_\bH$ and $H_\bH$ of $A$ and $H$ respectively.   

One can connect the Everettian interpretation of the hyperfinite-dimensional system $(\bH,A_\bH,H_\bH)$ to an Everettian interpretation of the standard system $(\cal H,A,H)$ in a natural way. Proposition \ref{appev} already suggests that how this might work.  Indeed, for each $\lambda\in \sigma(A)$, a potential measurement result for the observable $A$, there is an eigenvalue (that is, a determinate measurement) for the internal observable $A_\bH$ that is infinitely close to $\lambda$, which we may interpret as an approximation to measuring $\lambda$ with infinite precision. But there are further conditions we would like to hold, namely:

\begin{enumerate}
     \item For standard states, the standard and internal unitary time evolutions should be infinitely close to one another.
    \item The probabilistic interpretations afforded by the hyperfinite model should agree (up to an infinitesimal difference) with the usual quantum mechanical probabilities associated to the standard system.  
   
\end{enumerate}

Before formulating a precise definition from these two conditions, we will say a bit more about each.

Regarding (1), for each $t\in \mathbb R$, set $U_t$ to be the unitary operator on $\cal H$ given by $U_t:=e^{-itH}$.\footnote{Of course now the exponentiated operator is defined using the spectral theorem.}  Also, for $t\in \starR$, we can consider the internal unitary operator $V_t$ on $\bH$ given by $V_t:=e^{-itH_{\bH}}$.\footnote{One might be tempted to simply consider the natural extension $(U_t)_\bH$ of $U_t$, but in general this need not be unitary.}  A precise formulation of condition (1) above would be that $U_t\dphi\approx V_t\dphi$ for all $\dphi\in D(H)$ and all $t\in \mathbb R$, that is, for a state in the domain of the Hamiltonian $H$ (for which $U_t$ does really solve the time-dependent Schr\"odinger equation with initial state $\dphi$), the standard and internal unitary evolutions of the state remain infinitely close for all (standard) time.

In connection with the previous paragraph, we say that an element $\dphi\in {}^*\cal H$ is \textbf{nearstandard}, and write $\dphi\in \ns({}^*\cal H)$, if there is $\dpsi\in \cal H$ such that $\dphi\approx \dpsi$.  In this case, we set $\st\dphi$ to denote this (necessarily unique) element of $\cal H$.

We now consider item (2).  For each $\dphi\in \cal H$ and each Borel subset $E\subseteq \mathbb R$, set $\mu^{A,\dphi}(E)$ to be the probability that a measurement of the observable $A$ in the state $\dphi$ yields a result in the set $E$.  Formally, $\mu^{A,\dphi}(E)=\langle \phi|P_E|\phi\rangle$, where $P_E$ is the projection-valued measure associated with the observable $A$ applied to the Borel set $E$.  On the other hand, for $\dphi\in \bH$ and internal $F\subseteq \starR$, set $\mu^{A_\bH,\dphi}(F)$ to be the internal probability that a measurement of the observable $A_\bH$ in the state $\dphi$ yields a result in the internal set $F$.  By transfer, this is calculated by the hyperfinite sum $\mu^{A_\bH,\dphi}(F)=\sum \{|\alpha_i|^2 \ : \ \lambda_i\in F\}$, where $\dphi=\sum_i \alpha_i\dpsii$ is the expansion of $\dphi$ in terms of the eigenbasis $\{\dpsii\ : \ i\in I\}$ fixed for $A_\bH$.  Note that $\mu^{A_\bH,\dphi}$ is indeed an internal finitely additive probability measure defined on the internal subsets of $\starR$.  

Given our discussion of item (1) above and the fact that a measurement of $A_\bH$ that belongs to $\st^{-1}(E)$ should be viewed as approximating (with infinite precision) a measurement of $A$ landing in $E$, a first guess as to what (2) might require is that, for each $\dphi\in \ns({}^*\cal H)$ and Borel subset $E\subseteq \mathbb R$, we have $\mu^{A,\st\dphi}(E)\approx\mu^{A_\bH,\dphi}(\st^{-1}(E))$.  The issue with this is that $\st^{-1}(E)$ is not generally an internal subset of $\starR$.  However, writing $E_n$ for the set of elements of $\starR$ within distance $\frac{1}{n}$ of ${}^*E$ and noting that each $E_n$ is internal, we see that $\st^{-1}(E)=\bigcap_{n\in \mathbb{N}^{>0}}E_n$ is Loeb measurable.  Thus, a correct formalization of (2) above would ask that $\mu^{A,\st\dphi}(E)=\mu^{A_\bH,\dphi}_L(\st^{-1}(E))$ for all $\dphi\in \ns({}^*\cal H)$ and all Borel subsets $E\subseteq \mathbb R$.

We summarize this discussion with a definition:

\begin{defn}
The hyperfinite-dimensional system $(\bH,A_\bH,H_\bH)$ is a \textbf{faithful model} of the standard system $(\cal H,A,H)$ if the following two conditions hold:
\begin{enumerate}
  \item For each $\dphi\in D(H)$ and each $t\in \mathbb R$, we have $U_t\dphi\approx V_t\dphi$.
    \item For each $\dphi\in \ns({}^*\cal H)$ and each Borel subset $E\subseteq \mathbb R$, we have $$\mu^{A,\st\dphi}(E)=\mu^{A_\bH,\dphi}_L(\st^{-1}(E)).$$
\end{enumerate}
\end{defn}

Note that in a faithful model, by setting $E:=\sigma(A)$, we have that, for any state $\dphi\in \ns({}^*\cal H)$, with $\mu^{A_\bH,\dphi}_L$-probability $1$, a measurement result of $A_\bH$ in the state $\dphi$ will yield a definite measurement that is infinitely close to an element of $\sigma(A)$.

In order to obtain a faithful model, a further technical assumption on $\cal M$ must be made.  This technical assumption does indeed hold in many cases of interest, as we will discuss below.

We set $$\fin(\bH):=\{\dphi\in \bH \ : \ \|\dphi\|\in \fin(\starR)\}.$$  For $\dphi\in \fin(\bH)$, we note that $\dphi\mapsto \st\|\dphi\|$ is a semi-norm on $\fin(\bH)$.  We also set $\operatorname{mon}(\bH):=\{\dphi\in \fin(\bH)\ : \ \|\dphi\|\approx 0\}$ and $\cirH:=\fin(\bH)/\operatorname{mon}(\bH)$, which is then a normed space under the norm induced by the above seminorm on $\fin(\bH)$.  If $ \dphi\in \fin(\bH)$, we denote its class in $\cirH$ by $\cdphi$.  For $\dphi,\dpsi\in \fin(\bH)$, note that $\langle \phi|\psi\rangle\in \fin(\starC)$ and thus the internal inner product on $\bH$ descends to an inner product on $\cirH$ defined by $\langle {}^{\circ}\phi|{}^{\circ}\psi\rangle:=\st\langle \phi|\psi\rangle$.  It follows from saturation that $\cirH$ is actually a Hilbert space, called the \textbf{nonstandard hull} of $\bH$.  Moreover, since $\cal M$ is dense in $\cal H$ and $E_\bH \dphi\approx \dphi$ for all $\dphi\in \cal H$, it follows that the map $\dphi\mapsto {}^{\circ}\dphi:\cal H\to \cirH$ is an embedding of Hilbert spaces; in what follows, one identifies $\cal H$ as a subspace of $\cirH$ via this embedding.

Suppose now that $B$ is an internally self-adjoint operator on $\bH$. In Raab (2004), the \textbf{nonstandard hull} of $B$ is defined to be a certain self-adjoint operator $\cirB$ on $\cirH$ whose precise definition is given in terms of spectral theory.  In the easy case that $B$ is finitely bounded, that is, when the internal operator norm $\|B\|$ of $B$ is a finite element of $\starR$ (which rarely happens in quantum mechanics), we have that $B(\fin(\bH))\subseteq \fin(\bH)$, whence one can define $\cirB\cdphi:={}^{\circ}(B\dphi)$. (This definition does not even require $B$ to be self-adjoint.)

Given a self-adjoint operator $A$ on $\cal H$ and an internally self-adjoint operator $B$ on $\bH$, we say that $B$ is a \textbf{hull extension} of $A$ if $D(A)\subseteq D(\cirB)$ and $A\dphi=\cirB \dphi$ for all $\dphi\in D(A)$.\footnote{Raab calls such a $B$ a nonstandard extension of $A$, but we find this nomenclature confusing.}  The reason for being interested in hull extensions is given by the following result, which follows from results in Raab (2004) and with further details presented in Goldbring (2021):

\begin{thm}
Suppose that $A_\bH$ and $H_\bH$ are hull extensions of $A$ and $H$ respectively.  Then the hyperfinite-dimensional system $(\bH,A_\bH,H_\bH)$ is a faithful model of the standard system $(\cal H,A,H)$.
\end{thm}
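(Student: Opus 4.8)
The plan is to reduce both conditions in the definition of a faithful model to statements about the nonstandard hull operators $\cirA$ and $\cirH$ (in the operator sense), exploiting the hypothesis that $A_\bH$ and $H_\bH$ are hull extensions of $A$ and $H$. The conceptual point is that ``hull extension'' means precisely that the spectral theory of the internal operator, viewed through the quotient map $\dphi \mapsto \cdphi$, recovers the spectral theory of the standard operator on its domain. So the strategy is: first, verify condition (1) by showing that unitary evolution commutes appropriately with passing to the nonstandard hull; second, verify condition (2) by showing that projection-valued measures do the same.

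First I would treat condition (1). Since $H_\bH$ is a hull extension of $H$, one has $D(H) \subseteq D(\cirH)$ (hull sense) with $H\dphi = \cirH\dphi$ for $\dphi \in D(H)$. The key input from Raab (2004) / Goldbring (2021) is that the nonstandard hull construction intertwines the internal functional calculus with the standard one: for a bounded continuous function $f$, the hull of $f(H_\bH)$ is $f(\cirH)$ in the appropriate sense, at least on nearstandard vectors. Applying this with $f_t(x) = e^{-itx}$ for fixed standard $t$, one gets that $V_t = e^{-itH_\bH}$ satisfies $\cdphi \mapsto {}^\circ(V_t\dphi) = e^{-it\cirH}\,\cdphi = U_t\,\cdphi$ for $\dphi$ nearstandard (here using that $U_t = e^{-itH}$ and that $\cirH$ restricted to $D(H)$ is $H$, so the two unitary groups agree). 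Unwinding the identification of $\cal H$ inside $\cirH$, this says exactly $V_t\dphi \approx U_t\dphi$ for all $\dphi \in D(H)$ and all $t \in \mathbb R$, which is condition (1). One subtlety to handle carefully: $U_t$ is defined on all of $\cal H$ whereas the cleanest statement of the intertwining may a priori only give the approximation for $\dphi \in D(H)$ — but $D(H)$ is all condition (1) asks for, so this is fine; alternatively one extends by density and the uniform boundedness of the unitaries.

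Next, condition (2). For $\dphi \in \ns({}^*\cal H)$ with $\st\dphi = \psi \in \cal H$, and a Borel set $E \subseteq \mathbb R$, the goal is $\mu^{A,\psi}(E) = \mu^{A_\bH,\dphi}_L(\st^{-1}(E))$. Writing $P_E$ for the PVM of $A$ and $Q_F$ for the internal PVM of $A_\bH$, the left side is $\langle \psi|P_E|\psi\rangle$ and, by transfer, $\mu^{A_\bH,\dphi}(F) = \langle \phi|Q_F|\phi\rangle$ for internal $F$. The approach: approximate $\mathbf 1_E$ from above and below by bounded continuous functions $g_n \le \mathbf 1_E \le f_n$ with $f_n - g_n \to 0$ in $\mu^{A,\psi}$-measure (possible since $\mu^{A,\psi}$ is a finite Borel measure on $\mathbb R$, by standard regularity). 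Using the functional-calculus intertwining again — that the hull of $f_n(A_\bH)$ acts as $f_n(\cirA) = f_n(A)$ on $D(A)$, and that $\langle\phi|f_n(A_\bH)|\phi\rangle \approx \langle\psi|f_n(A)|\psi\rangle$ for nearstandard $\dphi$ (one may first reduce to $\dphi = {}^*\psi$ for $\psi \in \cal M$ and then pass to general nearstandard vectors by the seminorm estimate, since $\|f_n(A_\bH)\| \le \|f_n\|_\infty$ is finite) — one sandwiches. On the internal side, $f_n(A_\bH) = \sum\{f_n(\lambda_i)|\dpsii\rangle\langle\dpsii|\}$, so $\langle\phi|f_n(A_\bH)|\phi\rangle = \sum_i f_n(\lambda_i)|\alpha_i|^2$, which, because $f_n$ is continuous and $\st(\lambda_i) \in E_n^{\text{nbhd}}$ controls membership in $\st^{-1}(E)$, is infinitely close to $\int f_n\circ\st \, d\mu^{A_\bH,\dphi}_L$, and letting $n\to\infty$ this converges to $\mu^{A_\bH,\dphi}_L(\st^{-1}(E))$ by the sandwich $g_n\circ\st \le \mathbf 1_{\st^{-1}(E)} \le f_n\circ\st$ (away from a set where the gap is small, controlled because $\mu^{A,\psi}(\{f_n > g_n\})\to 0$ transfers through). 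Matching the two limits gives the equality.

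The main obstacle I expect is exactly the transfer of the small-gap control between $f_n$ and $g_n$ from the standard measure $\mu^{A,\psi}$ to the Loeb measure $\mu^{A_\bH,\dphi}_L$: a priori one only knows $\int(f_n - g_n)\,d\mu^{A,\psi} \to 0$, and one must deduce $\int(f_n\circ\st - g_n\circ\st)\,d\mu^{A_\bH,\dphi}_L \to 0$, i.e. that the internal measure is not concentrating mass where the standard one is thin. This is precisely where the ``further technical assumption on $\cal M$'' (promised in the text) is used — it must ensure that the pushforward $\st_*\mu^{A_\bH,\dphi}_L$ equals $\mu^{A,\psi}$ on Borel sets, not merely that the two agree on the continuity sets of the former, and the hull-extension hypothesis is what makes that pushforward identity hold (via the functional calculus for bounded continuous test functions, which determine a Borel measure on $\mathbb R$ uniquely by Riesz representation). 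Once that measure-pushforward identity is isolated as a lemma, conditions (1) and (2) both follow by the routine approximation arguments sketched above.
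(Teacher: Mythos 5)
The paper itself gives no proof of this theorem: it is stated as following ``from results in Raab (2004) and with further details presented in Goldbring (2021),'' so there is no in-paper argument to compare against. Your outline is essentially a reconstruction of the route those references take: reduce condition (1) to the statement that the nonstandard hull intertwines the internal functional calculus with the standard one (so that ${}^{\circ}(e^{-itH_\bH})$ restricted to $\cal H$ is $e^{-itH}$, using that a self-adjoint operator admitting a self-adjoint extension to a larger Hilbert space makes the original space reducing), and reduce condition (2) to the pushforward identity $\st_*\mu^{A_\bH,\dphi}_L=\mu^{A,\st\dphi}$, established by testing against bounded continuous functions and invoking uniqueness of finite Borel measures on $\mathbb R$ determined by such integrals. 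That is the right skeleton, with the honest caveat that the two facts doing all the work --- that the hull of $f(B)$ is $f(\cirB)$ on nearstandard vectors, and that no spectral mass of a nearstandard state escapes to infinite eigenvalues --- are precisely the imported theorems, so your proposal is an assembly plan for the cited results rather than an independent proof. Given that the paper does exactly the same, this is acceptable.

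One concrete correction: in your last paragraph you appeal to the ``further technical assumption on $\cal M$'' promised in the text. That assumption is the common-core hypothesis, and it is \emph{not} among the hypotheses of this theorem; it enters only in the \emph{next} theorem, whose job is to show that the natural extensions $A_\bH$ and $H_\bH$ are hull extensions in the first place. Here hull extension is already assumed, and it alone must deliver the pushforward identity (your own parenthetical concedes this). Invoking the core assumption inside this proof would make the theorem circular with its successor, so that sentence should be deleted; the gap-control between $f_n$ and $g_n$ that worries you is instead handled by proving the pushforward identity directly on bounded continuous test functions and only then deducing equality on Borel sets, which sidesteps any sandwich of indicators.
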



Raab (2004) also provides a condition for when the natural extension is a hull extension.  Recall that a \textbf{core} of an unbounded self-adjoint operator $A$ is a dense subspace $\cal M$ of its domain on which $A\upharpoonright \cal M$ is essentially self-adjoint.

\begin{thm}
Suppose that $A$ is a self-adjoint operator on $\cal H$ and that $\cal M$ is a core of $A$.  Then for any hyperfinite-dimensional subspace $\bH$ of ${}^*\cal H$ that is adapted to $\cal M$, the natural extension $A_\bH$ is a hull extension of $A$.
\end{thm}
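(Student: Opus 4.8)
The plan is to work entirely through resolvents. Since $A$ is self-adjoint it is in particular symmetric, so by the previous lemma its natural extension $A_\bH$ is symmetric, hence internally self-adjoint (being an internal operator on a hyperfinite-dimensional space), and therefore the nonstandard hull ${}^{\circ}A_\bH$ is a well-defined self-adjoint operator on $\cirH$. Fix any $z\in\mathbb{C}\setminus\mathbb{R}$. Internally, $(A_\bH-z)^{-1}$ exists and has operator norm at most $1/|\mathrm{Im}\,z|$, so it is finitely bounded; write $R$ for this internal operator and ${}^{\circ}R$ for its descent to $\cirH$, so that ${}^{\circ}R\,{}^{\circ}\varphi={}^{\circ}(R\varphi)$ for all $\varphi\in\fin(\bH)$. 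The input I would take from Raab's construction of the nonstandard hull is the resolvent identity ${}^{\circ}R=({}^{\circ}A_\bH-z)^{-1}$; this is precisely the instance $f(\lambda)=(\lambda-z)^{-1}$ of the compatibility ${}^{\circ}(f(A_\bH))=f({}^{\circ}A_\bH)$ of the nonstandard hull with the continuous functional calculus, and I would either cite it from Raab (2004)/Goldbring (2021) or reconstruct it from the spectral-theoretic definition of ${}^{\circ}A_\bH$.

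The heart of the proof is the claim that ${}^{\circ}R$ restricts on $\cal H\subseteq\cirH$ to the genuine resolvent $(A-z)^{-1}$. Both $(A-z)^{-1}$ and ${}^{\circ}R\upharpoonright\cal H$ are bounded operators from $\cal H$ into $\cirH$, so it suffices to verify the identity on a dense subspace of $\cal H$. This is exactly where the hypothesis that $\cal M$ is a core enters: since $A\upharpoonright\cal M$ is essentially self-adjoint (equivalently, since $\cal M$ is a core of $A$ and $A-z$ is surjective), the subspace $(A-z)(\cal M)$ is dense in $\cal H$. So fix $\xi\in\cal M\subseteq\bH$ and put $\psi:=(A-z)\xi\in\cal H$, so that $(A-z)^{-1}\psi=\xi$. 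Using $A_\bH\xi=E_\bH(A\xi)$ together with the earlier lemma that $E_\bH\eta\approx\eta$ for every $\eta\in\cal H$ (applied to $\eta=A\xi$ and to $\eta=\psi$), one gets $(A_\bH-z)\xi-E_\bH\psi=(E_\bH(A\xi)-A\xi)+(\psi-E_\bH\psi)\approx 0$. Since $\|R\|$ is a standard finite number, applying $R$ preserves infinitesimality, so $\xi=R\big((A_\bH-z)\xi\big)\approx R(E_\bH\psi)$; as $E_\bH\psi\approx\psi$ this gives ${}^{\circ}R(\psi)={}^{\circ}(R(E_\bH\psi))={}^{\circ}\xi=\xi=(A-z)^{-1}\psi$, which proves the claim (and incidentally shows that ${}^{\circ}R$ maps $\cal H$ onto $D(A)\subseteq\cal H$).

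Finally I would read off the hull-extension property. Let $\varphi\in D(A)$ and set $\chi:=(A-z)\varphi\in\cal H$, so that $\varphi=(A-z)^{-1}\chi={}^{\circ}R(\chi)=({}^{\circ}A_\bH-z)^{-1}\chi$ by the previous two paragraphs. Since $({}^{\circ}A_\bH-z)^{-1}$ is the resolvent of the self-adjoint operator ${}^{\circ}A_\bH$, its range is $D({}^{\circ}A_\bH)$; hence $\varphi\in D({}^{\circ}A_\bH)$, and $({}^{\circ}A_\bH-z)\varphi=\chi=(A-z)\varphi$ yields ${}^{\circ}A_\bH\varphi=A\varphi$. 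Thus $D(A)\subseteq D({}^{\circ}A_\bH)$ and ${}^{\circ}A_\bH$ agrees with $A$ on $D(A)$, i.e. $A_\bH$ is a hull extension of $A$.

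The only genuinely delicate step is the resolvent identity ${}^{\circ}\big((A_\bH-z)^{-1}\big)=({}^{\circ}A_\bH-z)^{-1}$, which requires unwinding the spectral-theoretic definition of the nonstandard hull of an unbounded internal operator; everything else is bookkeeping with the projection $E_\bH$, the uniform bound on resolvents of self-adjoint operators, and the density of $(A-z)(\cal M)$ supplied by the core hypothesis.
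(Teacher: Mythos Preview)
The paper does not supply its own proof of this theorem; it simply attributes the result to Raab (2004). Your resolvent-based argument is correct and is the natural route (and is essentially how the result is established in Raab (2004) and Goldbring (2021)): show that the hull ${}^{\circ}R$ of the internal resolvent $R=(A_\bH-z)^{-1}$ agrees with $(A-z)^{-1}$ on the dense subspace $(A-z)(\cal M)$---precisely where the core hypothesis enters---extend by boundedness, and then use that the range of a resolvent is the domain of the operator to obtain $D(A)\subseteq D({}^{\circ}A_\bH)$ together with ${}^{\circ}A_\bH\upharpoonright D(A)=A$.

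One minor simplification: for $\xi\in\cal M\subseteq\bH$ you actually have $(A_\bH-z)\xi=E_\bH(A\xi)-z\xi=E_\bH\big((A-z)\xi\big)=E_\bH\psi$ \emph{exactly}, since $E_\bH\xi=\xi$; the approximate version you wrote is of course still valid but unnecessary. You are also right to isolate the compatibility ${}^{\circ}\big((A_\bH-z)^{-1}\big)=({}^{\circ}A_\bH-z)^{-1}$ as the one genuinely nontrivial input; it is immediate from the spectral-theoretic definition of ${}^{\circ}A_\bH$ given in the cited references.
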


Summarizing the relationship between the standard and nonstandard models, we have:

\begin{thm}
Suppose that $(\cal H,A,H)$ is a standard system for which the observable $A$ and the Hamiltonian $H$ share a common core $\cal M$.  Then for any hyperfinite-dimensional subspace $\bH$ of ${}^*\cal H$ adapted to $\cal M$, the hyperfinite-dimensional system $(\bH,A_\bH,H_\bH)$ is a faithful model of $(\cal H,A,H)$.
\end{thm}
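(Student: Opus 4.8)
The plan is to read this statement as the composition of the two theorems immediately preceding it, so that the proof consists entirely of verifying that their hypotheses are met and then chaining their conclusions; there is essentially no new content beyond this bookkeeping.

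First I would unpack the hypothesis. Saying that $\cal M$ is a common core of $A$ and $H$ means, in particular, that $\cal M$ is a core of $A$ with $\cal M\subseteq D(A)$, and that $\cal M$ is a core of $H$ with $\cal M\subseteq D(H)$; hence $\cal M\subseteq D(A)\cap D(H)$, which is exactly the standing assumption needed for the natural extensions $A_\bH$ and $H_\bH$ to be defined on the fixed hyperfinite-dimensional $\bH$ adapted to $\cal M$. (Existence of such an $\bH$ is guaranteed by the Lemma, but is not needed for the statement as phrased, which quantifies over all such $\bH$.)

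Next I would invoke the penultimate theorem — that the natural extension of a self-adjoint operator over one of its cores is a hull extension — twice: once applied to the operator $A$ with core $\cal M$, and once applied to the operator $H$ with core $\cal M$. The key observation making this legitimate is that the hypothesis ``$\bH$ is adapted to $\cal M$'' constrains only $\bH$ and $\cal M$, not the operator being extended, so the single fixed $\bH$ simultaneously witnesses the hypothesis for both applications. This yields that $A_\bH$ is a hull extension of $A$ and $H_\bH$ is a hull extension of $H$. Finally I would feed these two facts into the first of the three theorems, which then gives directly that $(\bH,A_\bH,H_\bH)$ is a faithful model of $(\cal H,A,H)$, completing the proof.

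The only place demanding any attention — and it is a triviality rather than an obstacle — is the observation just noted: that a \emph{common} core genuinely provides a core for each of $A$ and $H$ separately, and that adaptedness is operator-independent, so one $\bH$ does double duty. All the analytic substance (the spectral-theoretic definition of the nonstandard hull of an operator, and the verification that hull extensions yield agreement of time evolutions and of the Loeb-measure probabilities) has already been absorbed into the two cited theorems via Raab (2004) and Goldbring (2021), so no genuine difficulty remains here.
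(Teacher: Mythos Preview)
Your proposal is correct and matches the paper's own approach exactly: the paper presents this theorem with the phrase ``Summarizing the relationship between the standard and nonstandard models, we have,'' and offers no separate proof, since it is precisely the composition of the two preceding theorems that you describe. Your bookkeeping observations about the common core and the operator-independence of adaptedness are spot on and constitute the only content needed.
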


There are many instances in which the common core assumption is satisfied.  For example, whenever the Hamiltonian $H$  is a ``small perturbation'' of the observable $A$, then the common core assumption holds.  Relevant theorems along these lines are \emph{W\"ust's theorem} and the \emph{Kato-Rellich theorem}.\footnote{See Chapter X of Reed and Simon (1975) for more details.}  We will study particular instances of these results when we study the motion of a particle in $\mathbb R^n$ in Section \ref{examplemotion}.

The hyperfinite model arguably comes with surplus expressive power. In addition to the sort of states that one would want to be able to represent, the representation allows for nonstandard states with curious properties. There are, for example, states $\dphi\in \bH$ such that $\mu^{A_\bH,\dphi}_L(\st^{-1}(\sigma(A))<1$. These are states for which there is a positive probability of getting a measurement result $\lambda\in \sigma(A_\bH)$ that may not belong to $\st^{-1}(\sigma(A))$.  Such a $\lambda$ may be a finite hyperreal, in which case one would have a nonzero probability of measuring a value that is infinitely close to a real value which itself could never have been a result of measuring $A$.\footnote{If $\sigma(A)=\mathbb R$, as in the case of position or momentum, this possibility cannot arise.}  On the other hand, such a $\lambda$ may be infinite. In this case the corresponding eigenstate $\dpsi$ would correspond to an infinite expectation value for the observable. There is nothing inherently wrong with allowing for such a possibility when it makes good physical sense. But sometimes it doesn't.\footnote{Such phenomena have been discussed in other nonstandard treatments of quantum mechanics. See, for example, Benci et. al. (2019).} 

That said, there is little reason to worry about excessive expressive power of the hyperfinite model in the context of a no-collapse formulation like Everettian quantum mechanics. Here the nearstandard vector representing a nearstandard global state always evolves infinitely close to the usual deterministic unitary evolution and hence stays nearstandard. Consequently, states that make good physical sense continue to make good physical sense under the dynamics. Further, what worlds there are and what their states are at each time is fully determined by the linear decomposition of nearstandard state in terms of the eigenstates of one's specified observable.\footnote{Compare this with Raab's (2004) formulation of the standard collapse theory using nonstandard methods where the dynamics may require collapses to nonstandard eigenstates where there is no clear physical interpretation.}






\section{Nonstandard formulations of Everett's limiting properties}

Everett considered the statistical limiting properties of pure wave mechanics to be centrally important to his interpretation of the theory. He discussed this point at a conference on the foundations of quantum mechanics at Xavier University in 1962. While the participants were most interested in talking about branching worlds, Everett wanted to talk about his relative frequency and randomness results. He began an extended monologue on the topic:
\begin{quote}
I'd like to make one final remark here. Imagine a very large series of experiments made by an observer.  With each observation, the state of the observer splits into a number of states, one for each possible outcome, and correlated to the outcome. Thus the state of the observer is a constantly branching tree, each element of which describes a particular history of observations. Now, I would like to assert that, for a ``typical'' branch, the frequency of results will be precisely what is predicted by ordinary quantum mechanics. (Barrett and Byrne 2012, 274--5)
\end{quote}
To make this claim, Everett explained, one needs a measure over branches. He chose the norm-squared coefficient measure on the grounds that is had a number of salient formal properties. He took his most significant achievement to be in showing that, in this measure, measure one of the branches in the determinate-record basis will exhibit the standard quantum statistics in the limit as an infinite number of measurements are made. 

Everett had briefly sketched his argument for his \emph{relative frequency} property in the long version of his Ph.D.\ thesis (1956, 126--7) and in the much shorter version of his thesis that he defended and was subsequently published as a standalone research paper (1957, 190--4). He had also, yet more briefly, sketched an argument for a \emph{randomness} property (1956, 127--8). These two properties together say that measure one of the branches will exhibit random sequences of results exhibiting the usual quantum statistics in the limit as an infinite number of measurements are made.\footnote{See Hartle (1968) for an early discussion of the relative frequency property, and and Farhi, Goldstone, and Gutmann (1989) for a proof. See Barrett (1999) for a discussion of the experimental setup and both the relative-frequency and randomness properties.}  Hence, Everett concluded, ``all predictions of the usual theory will appear to be valid to the observer in almost all observer states'' (1957, 194). This is what it meant to him for his formulation of pure wave mechanics to be empirically adequate.\footnote{See Barrett (2015) for a discussion of how Everett thought of empirical adequacy.}

To see how this works in the context of the hyperfinite model, fix an infinite element $K$ of $\starN$ and let $\bH:=\bigotimes_{i=1}^K \starC^2$ denote the hyperfinite-dimensional space that is the internal tensor product of $K$ copies of $\starC^2$. This space allows one to represent the composite system of the measuring device and the first $K$ object systems after $K$ sequential measurements of $x$-spin.  Suppose that each particle is prepared in state (1) above.  We consider the element $|\psi\rangle:=\sum_{x\in 2^K}\alpha^{i_x}\beta^{K-i_x}|x\rangle$ of $\bH$; here, $i_x$ denotes the internal cardinality of the set of those $j=1,\ldots,K$ such that $x(j)=1$ (which represents $x$-spin up).  Set $\lambda_x:=\frac{i_x}{K}$ for the relative frequency of achieving $x$-spin up according to the branch $x$.  We let $B:\bH\to\bH$ denote the internal self-adjoint operator defined by $B|x\rangle:=\lambda_x|x\rangle$.  Note that $B$ is internally bounded.

Let $\mu^{|\psi\rangle}$ denote the internal scalar-valued spectral measure corresponding to $|\psi\rangle$, that is, for any internally Borel subset $E\subseteq \starR$, $\mu^{|\psi\rangle}(E):=\sum_{\lambda_x\in E}|\alpha^{i_x}\beta^{K-i_x}|^2$.  We let $\mu_L^{|\psi\rangle}$ denote the corresponding Loeb measure.  We also let $\cirB:\cirH\to \cirH$ denote the corresponding nonstandard hull map (as defined in the previous section) with scalar-valued spectral measure $\nu^{{}^{\circ}|\psi\rangle}$.

The following theorem is the nonstandard reformulation of Everett's relative frequency theorem:\footnote{This theorem follows the same line of argument as the theorem for the standard case. See Barrett (1999, 100--4).}

\begin{thm}
Using the above notation, we have $\mu_L^{|\psi\rangle}(\{x\in 2^K \ : \ \lambda_x\approx \alpha^2\})=1$, whence $\nu^{{}^{\circ}|\psi\rangle}(\{\alpha^2\})=1$.  Consequently, ${}^{\circ}|\psi\rangle$ is an eigenvector of $\cirB$ with eigenvalue $\alpha^2$.
\end{thm}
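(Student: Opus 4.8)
The plan is to read the displayed equality as a hyperfinite instance of the weak law of large numbers, push the resulting internal estimate down to the Loeb measure, and then extract the spectral-measure and eigenvector statements from it. First, I would note that $\sum_{x\in 2^K}|\alpha^{i_x}\beta^{K-i_x}|^2=(|\alpha|^2+|\beta|^2)^K=1$ by the internal binomial theorem, so $|\psi\rangle$ is a unit vector, and under the internal probability measure these weights define on $2^K$ the random variable $\lambda_x=\tfrac1K\sum_{j=1}^K x(j)$ is distributed as $\tfrac1K$ times an internal Binomial$(K,|\alpha|^2)$ variable, hence has mean $|\alpha|^2$ and variance $|\alpha|^2|\beta|^2/K$ by transfer of the classical formulas. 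The transferred Chebyshev inequality then gives, for every \emph{standard} $\varepsilon>0$,
\[
\mu^{|\psi\rangle}\bigl(\{\lambda\in\starR:\ \bigl|\lambda-|\alpha|^2\bigr|\ge\varepsilon\}\bigr)\ \le\ \frac{|\alpha|^2|\beta|^2}{K\varepsilon^2}\ \approx\ 0,
\]
since $K$ is infinite. (Here and below, the ``$\alpha^2$'' of the statement denotes $|\alpha|^2$.)

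Second, I would convert this into the stated Loeb-measure identity. For each standard $n\ge 1$ the internal set $\{x\in 2^K:\bigl|\lambda_x-|\alpha|^2\bigr|\ge 1/n\}$ has infinitesimal $\mu^{|\psi\rangle}$-measure, hence is $\mu_L^{|\psi\rangle}$-null; and since $\{x\in 2^K:\lambda_x\approx|\alpha|^2\}=\bigcap_{n\ge 1}\{x:\bigl|\lambda_x-|\alpha|^2\bigr|<1/n\}$, its complement is a countable union of Loeb-null sets and hence itself Loeb-null. Therefore $\mu_L^{|\psi\rangle}(\{x:\lambda_x\approx|\alpha|^2\})=1$.

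Third, for the eigenvector and spectral-measure assertions, I would use that $B$ is internally bounded with $\sigma(B)\subseteq{}^*[0,1]$, hence finitely bounded, so $\cirB\,{}^{\circ}|\phi\rangle={}^{\circ}(B|\phi\rangle)$ for all $|\phi\rangle\in\fin(\bH)$. Given a standard $\varepsilon>0$, let $Q_\varepsilon$ be the internal spectral projection of $B$ onto the span of those $|x\rangle$ with $\bigl|\lambda_x-|\alpha|^2\bigr|<\varepsilon$. Then $\|Q_\varepsilon|\psi\rangle\|^2=\mu^{|\psi\rangle}(\{\lambda:\bigl|\lambda-|\alpha|^2\bigr|<\varepsilon\})\approx 1$, so by Pythagoras $|\psi\rangle\approx Q_\varepsilon|\psi\rangle$, whereas $\|(B-|\alpha|^2)Q_\varepsilon|\psi\rangle\|\le\varepsilon$ by construction; since $\|B-|\alpha|^2\|\le 2$ is finite, the triangle inequality gives $\st\|(B-|\alpha|^2)|\psi\rangle\|\le\varepsilon$. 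As $\varepsilon$ ranges over the standard positive reals this forces $B|\psi\rangle\approx|\alpha|^2|\psi\rangle$, hence $\cirB\,{}^{\circ}|\psi\rangle={}^{\circ}(B|\psi\rangle)=|\alpha|^2\,{}^{\circ}|\psi\rangle$; and since $\|{}^{\circ}|\psi\rangle\|=\st\||\psi\rangle\|=1\ne 0$, the vector ${}^{\circ}|\psi\rangle$ is genuinely an eigenvector of $\cirB$ with eigenvalue $|\alpha|^2$. As the scalar-valued spectral measure of a self-adjoint operator at a unit eigenvector is the Dirac mass at the eigenvalue, $\nu^{{}^{\circ}|\psi\rangle}(\{|\alpha|^2\})=1$.

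I do not expect a genuine obstacle: the one substantive ingredient is the hyperfinite weak law of large numbers, which by transfer is nothing more than the textbook Chebyshev bound, and everything else is routine Loeb-measure and spectral bookkeeping. The point deserving the most care is the measurability step in the second paragraph — recognizing $\{x:\lambda_x\approx|\alpha|^2\}$ as a countable intersection of internal sets before invoking countable additivity of the Loeb measure. As an alternative to the third paragraph, one could instead obtain $\nu^{{}^{\circ}|\psi\rangle}$ directly as the standard-part pushforward of $\mu_L^{|\psi\rangle}$ (legitimate here because all of $\sigma(B)$ is nearstandard), read off $\nu^{{}^{\circ}|\psi\rangle}(\{|\alpha|^2\})=1$ from the first assertion, and then deduce the eigenvector statement from $\int(t-|\alpha|^2)^2\,d\nu^{{}^{\circ}|\psi\rangle}(t)=0$.
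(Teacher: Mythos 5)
Your proof is correct and follows essentially the route the paper intends: the paper defers (via its footnote) to the standard relative-frequency argument, i.e.\ a Chebyshev/weak-law estimate on the binomial weights, which is exactly your first step, and your Loeb-measure countable-intersection argument and spectral bookkeeping for the eigenvector claim are the natural completions of that transfer. Your explicit reading of $\alpha^2$ as $|\alpha|^2$ is also the correct one.
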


The argument for the nonstandard interpretation of Everett's randomness theorem is relatively straightforward. Fix infinite $K\in \starN$ and consider the ``restriction map'' $f:2^K\to 2^{\mathbb N}$ which restricts a hyperfinite sequence of $0$'s and $1$'s to its initial segment indexed by standard natural numbers.  Consider $2^K$ equipped with the Loeb measure $\mu_L$ associated with the internal product measure on $2^K$ while we consider $2^{\mathbb N}$ with its usual Lebesgue measure (that is, infinite product measure) $m$.  It is not hard to verify that the pushforward of $\mu_L$ along $f$ is precisely $m$.  Consequently, if we repeat our measurement $K$ times as above, then the total measure, as determined by $\mu_L$, that our sequence $x$ belongs to $f^{-1}(E)$ for any Lebesgue null subset $E$ of $2^{\mathbb N}$ is $0$.  In particular, if $E$ is the set of nonrandom sequences, with respect to any standard criterion for what it means for such a sequence to be nonrandom, then $E$ will be countable and thus be Lebesgue null. Hence almost all sequences in measure $\mu_L$ will satisfy any standard criterion for being random.\footnote{See Barrett (1999, 104--5) for this theorem in the standard context and Barrett and Huttegger (2019) for an extended discussion of quantum randomness.} 

As a simple concrete example of the relative frequency and randomness properties, consider an observer who repeats an $x$-spin measurement on an $\omega$ sequence of particles each prepared in state (1) above. The two basic limiting properties in the context of the hyperfinite model entail that measure one of the branches, in measure $\mu_L$, will exhibit the standard quantum relative frequencies with $|\alpha|^2$ proportion of the results being ``$\uparrow_x$'' and $|\beta|^2$ proportion of the results being ``$\downarrow_x$'' and with no computationally specifiable pattern. Since measure one of the worlds will exhibit the standard quantum statistics with randomly distributed results, the standard quantum predictions will hold in a typical world (in the sense of typical given by measure $\mu_L$). 

We now prove a continuous spectrum version of Everett's relative frequency theorem.

Suppose that the hyperfinite-dimensional system $(\bH,A_\bH,H_\bH)$ is a faithful model of $(\cal H,A,H)$.  We adopt the same notation as in the previous section.  Fix $\dphi\in \ns({}^*\mathcal{H})$ and infinite $K\in \starN$.  Let $\zeta^{K,\dphi}$ denote the internal measure on $I^K$ obtained by taking the internal product measure of the measure $\mu^{A_\bH,\dphi}$.  (This is the internal notion of typicality here.)  Also, let $\eta_L^{K}$ denote the Loeb measure on $[1,K]$ coming from counting measure as described in Section \ref{NSA}.  For each $y\in I^K$ and Borel set $F\subseteq \mathbb R$, let $X_{F,y}:=\{j\in [1,K] \ : \lambda_{y(j)}\in \st^{-1}(F)\}$.  Note that each $X_{F,y}$ is a Loeb measurable subset of $[1,K]$.

\begin{thm}\label{extendedeverett}
Fix $\dphi\in \ns({}^*\cal H)$.  Then for sufficiently large infinite $K\in \starN$, we have the following:  for every Borel set $F\subseteq \mathbb R$, we have that $\eta_L^{K}(X_{F,K,y})=\mu^{A,\st\dphi}(F)$ for $\zeta_L^{K,\dphi}$-almost all $y\in I^K$.  
\end{thm}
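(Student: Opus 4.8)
The plan is to read the statement as a law of large numbers for the empirical distribution of the (internally independent, identically distributed) coordinate measurements, transported through the standard-part map, with the external quantifier ``for every Borel $F$'' absorbed into a reduction to a countable determining family. So first I would repackage the data. For $y\in I^K$ write $\pi_y:=\frac1K\sum_{j\in[1,K]}\delta_{\lambda_{y(j)}}$ for the internal empirical probability measure on $\starR$. For Borel $F\subseteq\mathbb R$ the set $X_{F,y}$ is the preimage of $\st^{-1}(F)$ under the internal map $j\mapsto\lambda_{y(j)}$; since the collection of $F$ for which $\st^{-1}(F)$ lies in the $\sigma$-algebra $\Sigma$ generated by the internal subsets of $\starR$ is itself a $\sigma$-algebra containing the closed sets (for closed $F$, $\st^{-1}(F)=\bigcap_n F_n$ is the countable intersection of the internal $\frac1n$-neighborhoods of ${}^*F$), it contains all Borel sets, so $X_{F,y}$ is Loeb measurable and, matching the two finite $\sigma$-additive set functions on internal sets, $\eta_L^K(X_{F,y})=(\pi_y)_L(\st^{-1}(F))$. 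Condition (2) of the faithful-model definition applied with $E=\sigma(A)\subseteq\mathbb R$ shows $\mu^{A_\bH,\dphi}_L$ is concentrated on $\fin(\starR)$ and that its standard-part pushforward equals $\mu^{A,\st\dphi}$. Thus the theorem amounts to: for $\zeta_L^{K,\dphi}$-almost every $y$, the pushforward $\rho_y:=\st_*(\pi_y)_L$ is a Borel probability measure on $\mathbb R$ with $\rho_y=\mu^{A,\st\dphi}$.

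Next I would invoke the weak law of large numbers as the engine. For a fixed internal Borel $C\subseteq\starR$, the functions $y\mapsto\mathbf 1[\lambda_{y(j)}\in C]$, $j\in[1,K]$, are under $\zeta^{K,\dphi}$ internally i.i.d.\ with mean $\mu^{A_\bH,\dphi}(C)$ and variance $\le\frac14$, so transfer of Chebyshev's inequality gives that the internal set $\{y:|\pi_y(C)-\mu^{A_\bH,\dphi}(C)|\ge\delta\}$ has $\zeta^{K,\dphi}$-measure $\le\frac1{4K\delta^2}$ for each standard $\delta>0$; as $K$ is infinite this is infinitesimal, hence the set is $\zeta_L^{K,\dphi}$-null. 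Therefore $\pi_y(C)\approx\mu^{A_\bH,\dphi}(C)$ for $\zeta_L^{K,\dphi}$-almost every $y$, and, intersecting countably many null sets, this holds simultaneously for any prescribed countable collection of internal $C$'s.

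To obtain a single exceptional set valid for all Borel $F$, I would exploit that a Borel probability measure on $\mathbb R$ is determined by its values on the $\pi$-system $\{(-\infty,q]:q\in\mathbb Q\}$. Each $\st^{-1}((-\infty,q])$ is Loeb measurable, so by the standard inner/outer internal approximation of Loeb measurable sets there are, for each $m\in\mathbb N$, internal $C_{q,m}^-\subseteq\st^{-1}((-\infty,q])\subseteq C_{q,m}^+$ with $\mu^{A_\bH,\dphi}_L(C_{q,m}^+\setminus C_{q,m}^-)<\frac1m$. Feeding the countable family $\{C_{q,m}^\pm:q\in\mathbb Q,\ m\in\mathbb N\}\cup\{\{x:|x|\le R\}:R\in\mathbb N\}$ into the previous paragraph yields a $\zeta_L^{K,\dphi}$-conull set $Y$. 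For $y\in Y$: since $\pi_y(\{|x|\le R\})\approx\mu^{A_\bH,\dphi}(\{|x|\le R\})$ and $\mu^{A_\bH,\dphi}_L(\{|x|\le R\})\uparrow1$, we get $(\pi_y)_L(\fin(\starR))=1$, so $\rho_y$ is a Borel probability measure; and squeezing $\st^{-1}((-\infty,q])$ between $C_{q,m}^\pm$ (on which $(\pi_y)_L$ and $\mu^{A_\bH,\dphi}_L$ agree by construction of $Y$, and whose two measures bracket that of $\st^{-1}((-\infty,q])$ to within $\frac1m$) forces $\rho_y((-\infty,q])=(\pi_y)_L(\st^{-1}((-\infty,q]))=\mu^{A_\bH,\dphi}_L(\st^{-1}((-\infty,q]))=\mu^{A,\st\dphi}((-\infty,q])$ for every $q$. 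By uniqueness of measures agreeing on a generating $\pi$-system, $\rho_y=\mu^{A,\st\dphi}$ on all of $\mathbb R$, whence $\eta_L^K(X_{F,y})=\rho_y(F)=\mu^{A,\st\dphi}(F)$ for every Borel $F$ and every $y\in Y$. Since the Chebyshev bound is infinitesimal for every infinite $K$, the argument in fact goes through for every infinite $K$.

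I expect the main obstacle to be not any individual estimate but precisely the order of quantifiers: for a fixed Borel $F$ the conclusion follows immediately off an $F$-dependent null set from the law of large numbers, and the real content is manufacturing one null set good for all Borel $F$ at once. This is exactly what the reduction to a countable determining class buys, and it forces the two technical points one must handle carefully, namely the internal inner/outer approximation needed to access the non-internal sets $\st^{-1}(F)$ via the law of large numbers, and the tightness check (coming from $\mu^{A_\bH,\dphi}_L(\fin(\starR))=1$) that no empirical mass escapes to infinity so that $\rho_y$ really is a probability measure. Equivalently, one could run the same scheme through a sup-norm-dense countable subset of $C_c(\mathbb R)$, using $\st\bigl(\int{}^*f\,d\nu\bigr)=\int\st({}^*f)\,d\nu_L$ for internal measures $\nu$ together with the Riesz representation theorem in place of the $\pi$-system argument.
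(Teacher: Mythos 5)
Your proposal is correct, and it reaches the stated conclusion by a genuinely different route from the paper. The paper works with a single internal exceptional set: it transfers the classical finite-sample Everett relative-frequency theorem to obtain, for an infinitesimal $\epsilon>0$ and all $K$ beyond some infinite threshold $K_0$, an internal set $Z_K\subseteq I^K$ with $\zeta^{K,\dphi}(Z_K)>1-\epsilon$ on which $\bigl|\tfrac{|X_{E,K,y}|}{K}-\mu^{A_\bH,\dphi}(E)\bigr|<\epsilon$ \emph{uniformly over all internally Borel} $E\subseteq\starR$; it then sandwiches both $\eta^K_L(X_{F,K,y})$ and $\mu^{A,\st\dphi}(F)=\mu^{A_\bH,\dphi}_L(\st^{-1}(F))$ against $\mu^{A_\bH,\dphi}(F_M)$ for infinite $M$, using $\st^{-1}(F)=\bigcap_m F_m$ and continuity from above of the two Loeb measures. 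This is where the ``sufficiently large infinite $K$'' hypothesis enters: the threshold $K_0$ produced by transfer depends on the internal data (essentially $K$ must dwarf the internal cardinality of $I$) so that closeness holds for all internal events simultaneously on one internal set. You instead use only the per-event weak law (transferred Chebyshev), whose error bound $\tfrac{1}{4K\delta^2}$ is infinitesimal for \emph{every} infinite $K$, and you recover uniformity over all Borel $F$ externally, via a countable determining $\pi$-system $\{(-\infty,q]\}$, Loeb inner/outer approximation of $\st^{-1}((-\infty,q])$ by internal sets, a tightness check ensuring $\st_*(\pi_y)_L$ is a probability measure, and uniqueness of Borel measures agreeing on a generating $\pi$-system. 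The trade-off is clear: the paper's argument is shorter and delegates the combinatorial core to a known classical theorem, at the price of the $K_0$ hypothesis and of having to justify the transferred uniform statement; your argument is more self-contained, requires more measure-theoretic bookkeeping, and yields the mild strengthening that the conclusion (with a single exceptional null set good for all $F$) holds for every infinite $K$, which is consistent with, and stronger than, the theorem as stated.
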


\begin{proof}
Fix infinitesimal $\epsilon>0$.  For $K\in \starN$, $y\in I^K$, and internally Borel $E\subseteq \starR$, set $$X_{E,K,y}:=\{j\in [1,K] \ : \ \lambda_{y(j)}\in E\}.$$  Note that each $X_{E,K,y}$ is internal.  Furthermore, let $Z_{K}$ consist of those $y\in I^K$ such that $|\frac{|X_{E,K,y}|}{K}-\mu^{A_\bH,\dphi}(E)|<\epsilon$ for all internally Borel sets $E\subseteq \starR$.  Then by transferring the classical Everett limiting theorem, there is $K_0\in \starN$ such that, for all $K\geq K_0$, we have that $\zeta^{K,\dphi}(Z_K)>1-\epsilon$.  For each Borel set $F\subseteq \mathbb R$ and each $m\in\starN$ (finite or infinite), let $F_m:=\{\lambda\in \starR \ : \ \operatorname{dist}(\lambda,{}^*F)<\frac{1}{m}\}$.  Since $X_{F,K,y}=\bigcap_{m=1}^\infty X_{F_m,K,y}$, we have that $\eta^K_L(X_{F,K,y})\approx \eta^K(X_{F_M,K,y})$ for any infinite $M\in \starN$.  If $y\in Z_K$, then $\eta^K(X_{F_M,K,y})\approx \frac{|X_{F_M,K,Y}|}{K}\approx \mu^{A_\bH,\dphi}(F_M)$.  Since $\st^{-1}(F)=\bigcap_{m=1}^\infty F_m$, we have that $\mu^{A,\st\dphi}(F)=\mu^{A_\bH,\dphi}_L(\st^{-1}(F))\approx \mu^{A_\bH,\dphi}(F_M)$.

Putting this all together, we have that $\eta^K_L(X_{F,K,y})=\mu^{A,\st \dphi}(F)$ for all $y\in Z_K$.  Since $\zeta^{K,\dphi}_L(Z_K)=1$, we may conclude.
\end{proof}

In the statement of the theorem, how large $K$ needs to be chosen may depend on the state $\dphi\in \ns({}^*\cal H)$, that is, using the notation from the proof, the value of $K_0$ may depend on $\dphi$.  Setting $K_{\dphi}$ for this value of $K_0$ dependent on $\dphi$, it is readily verified that $K_{\dphi}=K_{\st\dphi}$.  Consequently, if the nonstandard extension is sufficiently saturated, then by choosing $K_0\in \starN$ greater than $K_{\dphi}$ for all $\dphi\in \cal H$, we see that the conclusion of the theorem holds for any $K\geq K_0$, \emph{independent of the choice of} $\dphi\in \ns({}^*\cal H)$.

In the next section, we will see an illustration of the limiting property just proven in the case of an observable with continuous spectrum. 



\section{An example:  motion of a particle in $\mathbb R^n$}\label{examplemotion}

In this section, we show how the hyperfinite representation works for a particular concrete example, namely for the motion of a single particle in $\mathbb R^n$.  In this case, $\cal H:=L^2(\mathbb R^n)$ and our observable $A$ under consideration will be $X_j$, the operator which multiplies a function by the variable $x_j$, for some $j=1,\ldots,n$.  This operator corresponds to observing the $j^{\text{th}}$-coordinate of the position of the particle.  The Hamiltonian for our system is given by $H:=-\frac{\hbar}{2m}\Delta+V(\vec X)$, where $\Delta$ is the $n$-dimensional Laplacian, $\vec X=(X_1,\ldots,X_n)$ is the vector consisting of the various position operators, and $V:\mathbb R^n\to \mathbb R$ is some function (whence $V(\vec X)$ represents the potential energy).   

We briefly discuss some natural conditions on which $A$ and $H$ satisfy the common core assumption.  A theorem of Kato (which follows from the aforementioned Kato-Rellich Theorem) states that if $n\leq 3$ and $V:\mathbb R^n\to \mathbb R$ belongs to $L^2(\mathbb R^n)+L^\infty(\mathbb R^n)$, then any core for $\Delta$ is a core for the Hamiltonian $H$.  A more general version of Kato's theorem holds for dimension $n\geq 4$ if the assumption $V\in L^2(\mathbb R^n)+L^\infty(\mathbb R^n)$ is replaced with $V\in L^p(\mathbb R^n)+L^\infty(\mathbb R^n)$ for some $p>\frac{n}{2}$.  Since $C^\infty_c(\mathbb R^n)$, the space of smooth functions on $\mathbb R^n$ with compact support, is a core for $\Delta$ as well as any $X_j$, we thus have many natural potential functions for which our common core assumption holds.\footnote{See Chapter X of Reed and Simon (1975) for more examples of when the common core assumption holds.}

From now on, we fix a potential function $V$ for which $A$ and $H$ share a common core $\cal M$.  We fix a hyperfinite-dimensional subspace $\bH$ of ${}^* \cal H$ that is adapted to $\cal M$.  Suppose that $(\dpsii)_{i\in I}$ is an internal orthonormal eigenbasis for $A_\bH$ with corresponding eigenvalues $(\lambda_i)_{i\in I}$.  Consequently, these $\lambda_i\in \starR$ represent the possible measurement outcomes when measuring $A_\bH$.  According to Proposition \ref{appev} above, for each $\lambda\in \sigma(A)=\mathbb R$, there is some $i\in I$ such that $\lambda\approx \lambda_i$.  That is, the measurement outcome $\lambda_i$ for $A_\bH$ should be thought of as an approximate measurement (with infinite precision) corresponding to a measurement of the $j^{\text{th}}$ coordinate of the particle being $\lambda$.  Each $\dpsii$, being an actual eigenvector of $A_\bH$, is such that $A_\bH\dpsii-\lambda_i\dpsii$ is orthogonal to every vector in $\bH$.  In particular, since $\bH$ contains $\cal M$, which is a dense subspace of $L^2(\mathbb R)$, we see that $\dpsii$ must behave like a Dirac function with $j^{\text{th}}$ coordinate function positioned at $\lambda_i$.

Fix some initial (standard) state $\dphi\in \cal H=L^2(\mathbb R)$ of the particle and write $\dphi=\sum_{i\in I}\alpha_i \dpsii$, where each $\alpha_i\in \starC$ (whence $\sum_{i\in I}|\alpha_i^2|=1$).  Suppose we then let the particle evolve according to the internal unitary evolution $V_t:=e^{-itH_\bH}$.  Given some $t\in \starR$, we may then consider the internal state of the system at time $t$, namely 
$$|\phi(t)\rangle:=V_t\dphi=V_t\left(\sum_{i\in I}\alpha_i\dpsii\right)=\sum_{i\in I}\beta_i \dpsii.$$ In the last step, we have rewritten the evolved state in terms of the eigenbasis for $A_\bH$.  Recall also that for standard time $t\in \mathbb R$, the above $\phi(t)\rangle$ will be infinitely close to the standard evolution $U_t|\phi\rangle$ of our original state $\dphi$.  

Suppose an observer now makes a measurement. The relative observer whose system state is $\dpsii$ would get the determinate result $\lambda_i$.  For a $\mu^{A_\bH,|\psi(t)\rangle}_L$-measure $1$ set of worlds, such $\lambda_i$ belongs to $\fin(\mathbb R)$ and represents an approximate measurement of the $j^{\text{th}}$ position of the particle being $\st(\lambda_i)$, where this approximate measurement has infinite precision.  Moreover, by Theorem \ref{extendedeverett} above, if one repeats this measurement a sufficiently large hyperfinite number of times, the relative frequency of landing infinitely close to any particular Borel set agrees with the standard quantum statistical measure $\mu^{A,|\st\phi(t)\rangle}$ of that Borel set. Further, the particular sequence of results will pass any standard test for randomness for almost all relative sequences in the norm-squared measure.  Finally, the global state $|\phi(t)\rangle$ evolves according to the internal dynamics $V$ at all times with the local states of the worlds determined by the nonstandard decomposition of the state.


\section{discussion}

The present hyperfinite model has a number of features that Everett would have found compelling. It allows one to understand the global state as a set of branches corresponding to the eigenvalues of a continuous-valued observable. The measure associated with the set of branches behaves in a natural way akin to what one might expect from a probability measure over a finite set. Indeed, the hyperfinite-dimensional state space behaves very much like a finite-dimensional state space. This makes operations like forming linear combinations of states and calculating probabilities particularly intuitive. The model gives values infinitesimally close to the standard quantum probabilities when the probabilities are finite. It is also possible for a branch to be associated with an infinitesimal probability. Conveniently, these sum just like finite probabilities. And the hyperfinite model allows for a straightforward formulation of the standard unitary quantum dynamics.

There is a close connection between the standard and the hyperfinite models. Specifically, as we have proven here, the latter provides a faithful representation of the former. The standard model, then, guides one in understanding the physical content of the nonstandard model. While the hyperfinite model arguably has more expressive power than one needs, nearstandard states remain nearstandard under the dynamics and hence continue to make good physical sense.

In addition to providing an intuitive picture of hyperfinitely many worlds, we have shown how one might prove versions of Everett's statistical limiting properties in the hyperfinite model. The result is that the norm-squared measure of the resulting branches that will exhibit the standard random quantum statistics in the limit as one performs an unbounded sequence of measurements is one. Hence, the statistical predictions of the standard collapse formulation of quantum mechanics will appear to be valid in almost all of the hyperfinite many worlds in this measure.

In brief, the hyperfinite model provides a way to think of a set of worlds that represent a continuous spectrum of measurement outcomes and for which one can prove versions of his limiting results. In this regard, it satisfies the conditions that Everett stipulated at the Xavier conference. Finally, it provides an intuitive framework in which to consider no-collapse formulations of quantum mechanics more generally.


\newpage

\begin{center} 
\large{Bibliography}
\end{center}

\noindent
Albeverio, S., J.E. Fenstad, R. Hoegh-Krohn, and T. Lindstrom (1986), ``Nonstandard methods in Stochastic Analysis and Mathematical Physics'', Dover Publications.

\vspace{.5cm}
\noindent
Barrett, Jeffrey A.\ (2020) \emph{The Conceptual Foundations of Quantum Mechanics}, Oxford University Press.

\vspace{.5cm}
\noindent
Barrett, Jeffrey A. (2015) ``Pure Wave Mechanics and the Very Idea of Empirical Adequacy,'' \emph{Synthese} 192(10): 3071-–3104.

\vspace{.5cm}
\noindent
Barrett, Jeffrey A. (2011a) ``On the Faithful Interpretation of Pure Wave Mechanics'' \emph{British Journal for the Philosophy of Science} 62 (4): 693--709. 

\vspace{.5cm}
\noindent
Barrett, Jeffrey A. (2011b) ``Everett's Pure Wave Mechanics and the Notion of Worlds'' \emph{European Journal for Philosophy of Science} 1 (2):277--302.

\vspace{.5cm}
\noindent
Barrett, Jeffrey A. (1999) \emph{The Quantum Mechanics of Minds and Worlds}, Oxford: Oxford University Press.

\vspace{.5cm}
\noindent
Barrett, Jeffrey A. and Peter Byrne (eds) (2012) \emph{The Everett Interpretation of Quantum Mechanics: Collected Works 1955-1980 with Commentary}, Princeton: Princeton University Press.

\vspace{.5cm}
\noindent
Barrett, J.\ A.\ and S.\ Huttegger (2019) ``Quantum Randomness and Underdetermination,'' forthcoming in \emph{Philosophy of Science}. https://doi.org/10.1086/708712.

\vspace{.5cm}
\noindent
Benic, V., L.\ Luperi Baglin, and K.\ Simonov, ``Infinitesimal and infinite numbers as an approach to quantum mechanics'' \emph{Quantum} (2019).

\vspace{.5cm}
\noindent
Bohm, David (1952) ``A Suggested Interpretation of Quantum Theory in Terms of `Hidden Variables','' Parts I and II, {\em Physical Review\/} 85, 166--179, 180--193.

\vspace{.5cm}
\noindent
DeWitt, Bryce S.\ (1971) ``The Many-Universes Interpretation of Quantum Mechanics'' in B.\ D.\ 'Espagnat (ed.) \emph{Foundations of Quantum Mechanics}. New York: Academic Press. Reprinted in DeWitt and Graham (1973) pp.\ 167--218.

\vspace{.5cm}
\noindent
DeWitt, Bryce S.\ and Neill Graham (eds.) (1973) \emph{The Many-Worlds Interpretation of Quantum Mechanics}. Princeton: Princeton University Press.

\vspace{.5cm}
\noindent
Everett, Hugh III (1956) ``The Theory of the Universal Wave Function.'' In Barrett and Byrne (eds) (2012, 72--172).

\vspace{.5cm}
\noindent
Everett, Hugh III (1957) `` `Relative State' Formulation of Quantum Mechanics,'' Reviews of Modern Physics, 29: 454--462. This is the published version of Everett's short Ph.D. thesis. In Barrett and Byrne (eds) (2012, 173--96).

\vspace{5mm}
\noindent
Farhi, E., J.\ Goldstone, S.\ Gutmann (1989) ``How Probability Arises in Quantum Mechanics,'' {\em Annals of Physics} 192(2) 368--382.

\vspace{5mm}
\noindent
Goldblatt, R. (1998) ``Lectures on the hyperreals'', \emph{Graduate Texts in Mathematics}

\vspace{5mm}
\noindent
Goldbring, I. (2021) ``A nonstandard proof of the spectral theorem for unbounded self-adjoint operators'', To appear in \emph{Expositions Mathematicae}.

\vspace{5mm}
\noindent
Goldbring, I. and S. Walsh (2019), ``An invitation to nonstandard analysis and its recent applications'' \emph{Notices of the AMS}, 66: 842-851.

\vspace{5mm}
\noindent
Hartle, J.\ B.: (1968) ``Quantum Mechanics of Individual Systems,'' {\em American Journal of Physics\/}, 36(8) 704--12.

\vspace{5mm}
\noindent
Raab, A. (2004) ``An approach to nonstandard quantum mechanics'', \emph{Journal of Mathematical Physics} 45: 4791--4809.

\vspace{5mm}
\noindent
Reed, M. and B. Simon (1975) ``Methods of modern mathematical physics, Volume 2'', Elsevier.

\vspace{.5cm}
\noindent
Saunders, Simon; Jonathan Barrett; Adrian Kent; David Wallace (eds) (2010) \emph{Many Worlds?: Everett, Quantum Theory, and Reality}, Oxford: Oxford University Press.

\vspace{.5cm}
\noindent
Sebens, C.\ and S.\ Carroll (2015) ``Self-Locating Uncertainty and the Origin of Probability in Everettian Quantum Mechanics'' arXiv:1405.7577v2 [quant-ph] 10 Jan 2015.

\vspace{.5cm}
\noindent
von Neumann, J.\ (1955) \emph{Mathematical Foundations of Quantum Mechanics}. Princeton: Princeton University Press. Translated by R. Beyer from \emph{Mathematische
Grundlagen der Quantenmechanik}. Berlin: Springer (1932).

\vspace{.5cm}
\noindent
Vaidman, Lev (2014) ``Many-Worlds Interpretation of Quantum Mechanics,'' \emph{Stanford Encyclopedia of Philosophy}. Fri Jan 17, 2014 main revision.

\vspace{.5cm}
\noindent
Vaidman, Lev (2012) ``Probability in the Many-Worlds Interpretation of Quantum Mechanics,'' in Yemima Ben-Menahem and Meir Hemmo (eds.), \emph{Probability in Physics (The Frontiers Collection XII)} Berlin Heidelberg: Springer, pp. 299--311.

\vspace{.5cm}
\noindent
Wallace, David (2012) \emph{The Emergent Multiverse: Quantum Theory according to the Everett Interpretation}, Oxford: Oxford University Press.

\end{document}